\newcommand\LL[1]{
\overline{#1}
}
\newcommand\M[1]{
\ddddot{#1}
}
\newcommand\Part{
\mathcal{P}(G)
}
\newcommand\PP{
P^{\ast}
}
\newcommand\Ll[1]{
\LL{#1_\ell}
}
\newcommand\Lr[1]{
\LL{#1_r}
}
\newcommand\Sc[1]{
\LL{#1_{c}}
}
\newcommand\Ml[1]{
\M{#1_\ell}
}
\newcommand\Mr[1]{
\M{#1_r}
}
\newenvironment{AlgBox}[3]{ 
  \protected@edef\@currentlabelname{#3}
  \protected@edef\@currentlabel{#2}
  \label{#3}
  \begin{mdframed}[style=alg, frametitle={#1}]
}{
  \end{mdframed}
}
\newcommand{\figdir}{figures/}
\newcommand\g[1]{
    \makebox[0pt]{\footnotesize \emph{#1}}
}
\newcommand\s[1]{ 
    \makebox[0pt]{\tiny{#1}}
}
\newcommand\vfigbegin[0]{
    \begin{figure*}[ht]
    \begin{center}
}
\newcommand\vfigend[3]{
    \end{center}
    \caption[#3]{#2}
    \label{#1}
    \end{figure*}
}
\newcommand
    \def\svgwidth{[}
    \def\svgwidth{#1}
\newcommand{\dual}{offset graph\xspace}
\newcommand{\Dual}{Offset graph\xspace}
\newcommand{\D}{\ensuremath{\mathcal{O}(G)}\xspace}
\newcommand{\DT}{\ensuremath{\mathcal{T}(G)}\xspace}
\newcommand{\DC}{\ensuremath{\mathcal{C}(G)}\xspace}
\newcommand{\DR}{\ensuremath{\mathcal{R}(G)}\xspace}
\title{Drawing bobbin lace graphs, or, Fundamental \\cycles for a subclass of periodic graphs} 
\author{Therese Biedl
    \and
    Veronika Irvine
\thanks{Research supported by NSERC.  Thank you to Anna Lubiw for helpful input.}
}
 \institute{David R.~Cheriton School of Computer Science, University of Waterloo}
\date{} 
\begin{document}

\maketitle 


\begin{abstract}
In this paper, we study a class of graph drawings that arise from bobbin lace patterns.
The drawings are periodic and require a combinatorial embedding with specific properties which we outline and demonstrate can be verified in linear time. In addition, a lace graph drawing has a topological requirement: it contains a set of non-contractible directed cycles which must be homotopic to $(1,0)$, that is, when drawn on a torus, each cycle wraps once around the minor meridian axis and zero times around the major longitude axis.
We provide an algorithm for finding the two fundamental cycles of a canonical rectangular schema in a supergraph that enforces this topological constraint.
The polygonal schema is then used to produce a straight-line drawing of the lace graph inside a rectangular frame.
We argue that such a polygonal schema always exists for combinatorial embeddings satisfying the conditions of bobbin lace patterns, and that we can therefore create a pattern, given a
graph with a fixed combinatorial embedding of genus one.
\end{abstract}

\section{Introduction}

Bobbin lace is a 500-year-old fibre art-form created by braiding threads together in complex patterns.  See Figure~\ref{fig:lace}.
Bobbin lace can depict landscapes, figures, flowers, as well geometric and abstract designs. Common to all bobbin lace compositions is the use of doubly periodic patterns to fill regions of any shape or size.  It is the study of these periodic patterns that we pursue here.
To create bobbin lace, threads, which are wound around wooden bobbins to facilitate handling, are arranged left to right in linear order $t_1,t_2, \dots t_{2n-1}, t_{2n}$.  The lacemaker selects four consecutive threads, starting at an odd index, and crosses the four threads over and under each other to form an alternating braid.  After one or several crossings are made in this manner, the four threads are set aside and another set of four (not the same four, but possibly using a subset of the original four) is selected, again starting at an odd index, and braided.  Since the selected threads are consecutive starting at an odd index, we can describe the pattern by tracking the movement of pairs of threads rather than individual strands.

\vfigbegin
    \def\svgwidth{\textwidth}
\begingroup%
  \makeatletter%
  \providecommand\color[2][]{%
    \errmessage{(Inkscape) Color is used for the text in Inkscape, but the package 'color.sty' is not loaded}%
    \renewcommand\color[2][]{}%
  }%
  \providecommand\transparent[1]{%
    \errmessage{(Inkscape) Transparency is used (non-zero) for the text in Inkscape, but the package 'transparent.sty' is not loaded}%
    \renewcommand\transparent[1]{}%
  }%
  \providecommand\rotatebox[2]{#2}%
  \ifx\svgwidth\undefined%
    \setlength{\unitlength}{546.32744141bp}%
    \ifx\svgscale\undefined%
      \relax%
    \else%
      \setlength{\unitlength}{\unitlength * \real{\svgscale}}%
    \fi%
  \else%
    \setlength{\unitlength}{\svgwidth}%
  \fi%
  \global\let\svgwidth\undefined%
  \global\let\svgscale\undefined%
  \makeatother%
  \begin{picture}(1,0.20831847)%
    \put(0,0){\includegraphics[width=\unitlength]{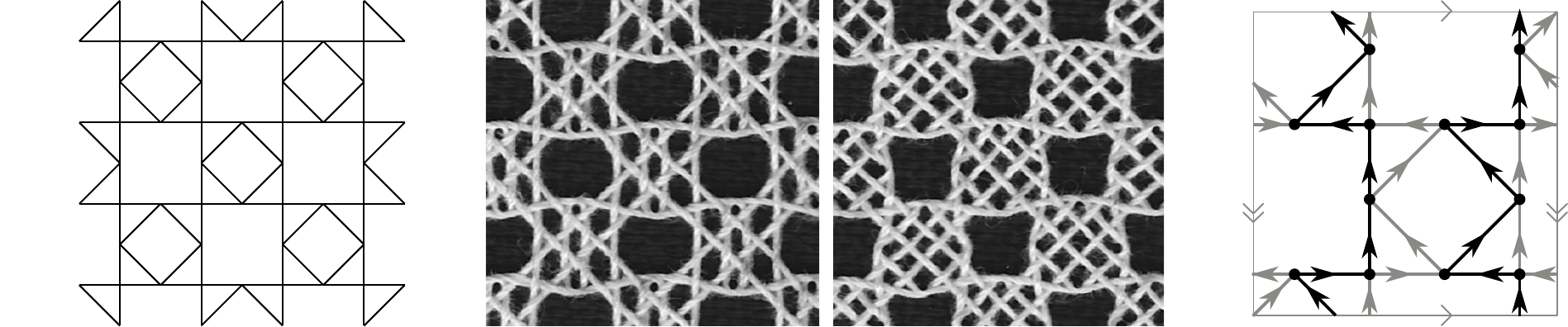}}%
    \put(0.02469437,0.09942235){\color[rgb]{0,0,0}\makebox(0,0)[b]{\smash{\g{(a)}}}}%
    \put(0.28451959,0.09942235){\color[rgb]{0,0,0}\makebox(0,0)[b]{\smash{\g{(b)}}}}%
    \put(0.76762126,0.09942235){\color[rgb]{0,0,0}\makebox(0,0)[b]{\smash{\g{(c)}}}}%
  \end{picture}%
\endgroup%

\vfigend{fig:lace}{Bobbin lace. (a) Pattern, (b) Lace with two different $\zeta(v)$ mappings, (c) Periodic graph drawing.  Four osculating circuits distinguished using black and gray.}{Bobbin lace}

The application of mathematics to the study of fibre arts dates back to the beginnings of computer science.  A survey of various areas, including knitting and weaving, is presented by Belcastro and Yackel \cite{belcastro2008}. Grishanov~et~al. take a deep look into knot theory and its relevance to the topology of textiles \cite{grishanov2009}.
The second author and Ruskey \cite{irvine}
were the first to develop a mathematical model for bobbin lace and express its patterns using graph drawings.
Specifically, a lace pattern can be represented as $(\Gamma(G), \zeta(v))$ where
$G$ is a combinatorial embedding that captures the flow of pairs of threads from one grouping of four to another, $\Gamma(G)$ gives a specific drawing of $G$ which assigns a geometry to the position of the braids, and $\zeta(v)$ is a mapping from each node $v \in V(G)$ to
a mathematical braid word which specifies the over and under crossings performed on the subset of four threads that meet at $v$.  A systematic exploration of different $\zeta(v)$ mappings is straight-forward, although time consuming, and has been undertaken by lacemakers for several traditional patterns; see \cite{irvine} for more details.  Discovering which pairs of threads can be successfully combined, as represented by $\Gamma(G)$, is a much harder task and is therefore the focus of this paper.

The main question investigated in this paper is to decide, for a directed graph
with a fixed rotation system, whether we can construct a straight-line drawing that is a lace pattern.
We argue that
recognizing such graphs can be done in linear time and depends only on their combinatorial structure (some of these results
were reported earlier \cite{irvine}).
Lace patterns are doubly periodic.  As a result, a lace graph can be
drawn on the surface of a torus and lifted to its universal cover, an infinite, periodic, planar graph.
A toroidal embedding can be drawn
as a straight-line planar graph drawing with a rectangular outer face
by choosing a canonical rectangular schema \cite{aleardi2012},\cite{duncan}.  The challenge is to
find two fundamental cycles to serve as the borders of the rectangle under the constraints of the
topological requirements of the lace pattern and restrictions imposed by the drawing algorithm itself.
We show how to find a suitable polygonal schema for any valid combinatorial embedding.

\section{Mathematical model of bobbin lace}
\label{sec:background}
We assume familiarity with graphs and combinatorial embeddings;
see for example \cite{mohar}.  Throughout this paper, $G=(V,E)$
is a directed graph that comes with a rotation system, i.e.,
a clockwise order of edges around each vertex.
We will assume that the rotation system describes a cellular
combinatorial embedding
on an orientable surface
(i.e., the complement of $G$ on the orientable surface is a collection of open topological disks).
The genus of the cellular embedding can be determined from the rotation system by computing the {\em facial walk} consisting of edges and
vertices incident to each face in order while walking around the face.  All embeddings of interest here have an Euler characteristic of 0.

A circuit is a closed path in which a vertex may be visited more than once but no edges are repeated.
A cycle is a closed path in which each vertex and edge is only visited once.
A {\em contractible cycle} is a cycle that can be continuously retracted
to a point.
A canonical rectangular representation of a torus is bounded by a pair of {\em non-contractible} cycles, also referred to as fundamental cycles, that have only one vertex in common, their point of intersection.

\subsection{Conditions on lace pattern graph embeddings}

In a lace pattern graph, every vertex $v$
must have exactly two incoming and two outgoing edges, corresponding
to two pairs of threads that meet at $v$, are braided
together and then separate.  We call such a graph a
$2$-$2$-regular digraph.

Lace patterns are doubly periodic, i.e., they can tile the plane by translation in two non-parallel directions.
One ``tile'' of the repeat can be drawn using a canonical rectangular schema in which
the position of a thread intersecting the horizontal boundaries of the schema  has the same abscissa top and bottom and a thread intersecting the vertical boundaries of the schema has the same ordinate value left and right.  In graph drawing this is called periodic.

We do not want lace created from a pattern to ``fall apart'', i.e., the
graph needs to be suitably connected.  In
graph-theoretical terms, this means $G$ must be a cellular embedding.

For a lace pattern to be workable, there must exist a partial ordering of the vertices
such that, for any directed edge, the braid mapped to the tail vertex is worked before that of the head.
The pattern, when repeated over the infinite plane, cannot contain directed cycles, or equivalently, the associated graph on the torus must be free from contractible directed circuits.

The graph may have loops but they must be non-contractible.   Parallel edges with the same orientation do not represent a change in the four threads being braided and therefore do not appear in a lace graph. Parallel edges with opposite orientation must form a non-contractible directed cycle. In other words, a lace graph may be a multigraph but no loop or parallel edge can be a face.

In summary, the following three conditions are required for the combinatorial embedding of any lace
pattern graph $G$:
\begin{itemize}
\item[C1.] $G$ is a directed $2$-$2$-regular digraph.
\item[C2.] The rotation system of $G$ describes a toroidal cellular embedding in which all facial walks contain at least 3 edges.
\item[C3.] All directed circuits of $G$ are non-contractible.
\end{itemize}

It is easy to check in linear time whether (C1) holds.  To test (C2),
we first compute the facial walks to ascertain the number of faces, and can then determine
whether the  embedding is toroidal since,
by Euler's formula, such an embedding with $n$ vertices and $2n$ edges must have exactly $n$ faces.

To test (C3) in linear time, we take advantage of the regular structure of our digraph via the following lemma:
\begin{lemma}
Presume a $2,2$-regular digraph has a toroidal cellular embedding $G$.
If $G$ has a contractible directed circuit $C$,
then $G$ will have at least one face bounded by a contractible directed circuit.
\label{lem:cycleFace}
\end{lemma}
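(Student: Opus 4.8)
\medskip\noindent
The plan is to argue by a minimality (extremal) argument. Since every face of a cellular toroidal embedding is an open disk, its boundary walk is automatically contractible, so it suffices to exhibit a face whose facial walk is a directed circuit. I would proceed as follows: among all contractible directed circuits of $G$, pick one, $C$, enclosing the fewest faces; reduce to the case that $C$ is a simple cycle (so it bounds a disk $\overline{D}$); and then show that $\overline{D}$ contains exactly one face, which is then a face bounded by $C$.

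To make ``enclosing'' precise, and to justify the reduction to a simple cycle, I would work in the universal cover. There $C$ lifts to a closed directed walk $\widetilde{C}$ in the infinite planar graph $\widetilde{G}$ (the lift closes up because $C$ is null-homotopic), it has no repeated edges, and it encloses a finite, nonempty set of faces --- those of nonzero winding number. A self-intersection of $\widetilde{C}$ can be resolved in a way that respects edge directions (each of the two strands enters the crossing vertex along an in-edge and leaves along an out-edge), splitting $\widetilde{C}$ into two shorter closed directed walks, each again descending to a contractible directed circuit; iterating and invoking minimality lets me assume $\widetilde{C}$ is a simple closed curve. Then $\widetilde{C}$ bounds a closed disk $\overline{D}$ projecting homeomorphically to a disk on the torus, and the part $H$ of $\widetilde{G}$ lying in $\overline{D}$ is an ordinary plane (multi)digraph with outer boundary walk $\widetilde{C}$ and exactly $k \ge 1$ bounded faces.

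The heart of the proof is to show $k=1$; then the unique bounded face has facial walk $\widetilde{C}$, a contractible directed circuit, as required. Suppose instead $k\ge 2$. By Euler's formula $H$ then has more edges than $\widetilde{C}$, so it contains an edge $e\notin\widetilde{C}$; since the endpoints of $e$ lie in $\overline{D}$, either $e$ is a chord of $\widetilde{C}$, or $e$ is incident to a vertex strictly inside $\overline{D}$, an \emph{interior} vertex. The structural fact to exploit is that every interior vertex has in-degree $2$ and out-degree $2$ in $H$, because it has a neighbourhood contained in $\overline{D}$ and so all four of its incident edges lie in $H$. Consequently any directed walk in $H$ can always be continued forwards and backwards, and being finite it must repeat a vertex, hence contain a directed cycle inside $\overline{D}$. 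I would now produce such a directed cycle $C'\subseteq\overline{D}$ with $C'$ distinct from $\widetilde{C}$ as an edge set: if $e$ is a chord with endpoints $a,b$, then $\widetilde{C}$ splits at $a,b$ into two directed arcs and one of them, together with $e$, forms a directed cycle using the non-$\widetilde{C}$ edge $e$; otherwise, from an interior vertex $v$ extend a directed walk forwards and backwards, obtaining either a directed cycle that stays among interior vertices, or a directed path through interior vertices between two vertices $a,b$ of $\widetilde{C}$, which closes up into a directed cycle upon appending the arc of $\widetilde{C}$ from $b$ to $a$ and which uses an interior edge.

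In all cases $C'$, a simple closed curve inside $\overline{D}$, bounds a disk properly contained in $\overline{D}$, hence encloses strictly fewer than $k$ faces; since $\overline{D}$ projects homeomorphically to the torus, $C'$ descends to a contractible directed simple cycle enclosing fewer faces than $C$, contradicting minimality. Hence $k=1$. The step I expect to be most laborious is bookkeeping rather than ideas: making ``number of enclosed faces'' and the uncrossing reduction to a simple cycle precise in the cover, and dealing with degenerate configurations inside $\overline{D}$ --- parallel edges, directed cycles of length one or two, and an interior directed path that returns to its starting vertex on $\widetilde{C}$ --- so that the extracted $C'$ is demonstrably different from $\widetilde{C}$ and demonstrably encloses fewer faces. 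The conceptual engine is brief: in a $2,2$-regular digraph the interior vertices of a minimal enclosing disk are balanced, which forces directed walks there to manufacture a smaller enclosed directed cycle unless the disk was a single face all along.
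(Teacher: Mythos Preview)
Your proposal is correct and follows essentially the same extremal argument as the paper: choose a contractible directed cycle that is minimal in an appropriate sense, then use $2$-$2$-regularity to follow a directed walk from an interior edge and manufacture a strictly smaller contractible directed cycle unless the original already bounds a face. The only cosmetic difference is that the paper works directly on the torus by fixing an ``origin'' face and \emph{maximizing} $|O_C-I_C|$ (which is equivalent to minimizing the smaller of the two sides), whereas you lift to the universal cover and minimize the number of enclosed faces; both proofs also dispatch the reduction from non-simple circuits to simple cycles with a brief uncrossing remark.
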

\begin{proof}
(Sketch)  Arbitrarily declare one face to contain the origin so that ``inside'' and
``outside'' are well-defined for any contractible directed circuit $C$.
Let $O_C$ and $I_C$ be the number of faces outside and inside $C$,
and consider a directed contractible circuit $C$ that maximizes $|O_C - I_C|$.  Prove, by contradiction, that a directed cycle containing an edge in its interior cannot maximize $|O_C - I_C|$ and therefore the inside of $C$ is a face as required.  Details are in the appendix.\qed\end{proof}

It follows that to verify condition (C3), we simply test whether any facial walk is a directed circuit.
Clearly this takes linear time.

There is one more important restriction for a workable bobbin lace pattern. The threads in the periodic pattern are continuous; threads are neither removed (by cutting) nor added (by knotting or weaving in). In other words, to fill a rectangle of fixed width and undetermined height, a fixed set of threads starts at the top of the rectangle and the same set of threads terminates (not necessarily in the same order) at the bottom of the rectangle.  To achieve this, the threads in one repeat of the pattern must not have a net drift to the right or left. See \cite{irvine} for a more detailed discussion of this {\em thread
conservation} property.

To formulate the thread conservation property in a mathematically precise way,
we require additional terminology and some observations resulting from (C1,C2,C3) which we will describe in the next section.

\subsection{Osculating circuits and thread conservation}
\label{sec:osculating}

Fix a digraph $G$ with a combinatorial embedding such that (C1,C2,C3) hold.
There are two ways in which edges can be arranged at a vertex $v$
with $\mathit{indeg}(v)=\mathit{outdeg}(v)=2$:
Either \emph{rotationally alternating} in which edges alternate between
incoming and outgoing directions or \emph{rotationally consecutive} with
edges in the order incoming, incoming, outgoing, outgoing.
Irvine and Ruskey~\cite{irvine}  showed that the following condition is necessary for (C3):
\begin{itemize}
\item[C3$'$.] At all vertices of $G$, the outgoing arcs are rotationally consecutive.
\end{itemize}

Under (C3$'$), we define the \emph{left/right incoming/outgoing} edges of $v$ as follows:
going in clockwise order
around $v$, we encounter first the left outgoing edge, then the right outgoing edge,
then the right incoming edge and finally the left incoming edge.
Consider two edge-disjoint directed circuits $C_1,C_2$ that have a vertex $v$
in common. There are two possible ways in which $C_1$ and $C_2$ can meet at $v$.
In an \emph{osculating intersection} these circuits only touch (``kiss''),
i.e., both incoming and outgoing edges of $C_1$ are on the left side of $v$ and the edges of $C_2$ are on the right side of $v$ (or vice versa).
In contrast, at a \emph{transverse intersection} the circuits truly cross,
i.e., $C_1$ enters from the left side of $v$ and exits from the right, $C_2$ enters from the right and exits from the left (or vice versa).
\begin{lemma}
Presume (C1,C2,C3$'$) hold.  The edges of $G$ can be partitioned into a set $\Part$ of disjoint
directed circuits such that no two circuits in $\Part$ have
a transverse intersection.  Furthermore, this partition is unique and
can be found in linear time.
\end{lemma}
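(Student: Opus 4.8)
The plan is to reduce the whole statement to a single forced local rule at each vertex. Under (C3$'$) each vertex $v$ has its two outgoing edges consecutive in the rotation, so by the left/right terminology introduced above $v$ has a left incoming edge, a right incoming edge, a left outgoing edge, and a right outgoing edge (a loop at $v$ simply plays two of these roles, and an oppositely oriented parallel pair is handled the same way). Define the \emph{canonical transition} at $v$ to pair the left incoming edge with the left outgoing edge and the right incoming edge with the right outgoing edge. The key observation, driving both existence and uniqueness, is that this is the \emph{only} pairing of the incoming edges with the outgoing edges of $v$ that is osculating rather than transverse.

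For existence I would turn the canonical transitions into a successor function $\sigma$ on the set of directed edges: if $e$ enters $v$ as its left (resp.\ right) incoming edge, let $\sigma(e)$ be the left (resp.\ right) outgoing edge of $v$. Since at every vertex exactly one incoming edge is ``left'' and one is ``right'', and likewise for the outgoing edges, $\sigma$ is a bijection. Hence the set of directed edges decomposes into the cycles of the permutation $\sigma$, and each such cycle, traversed in the direction of its edges, is a closed walk repeating no edge, i.e.\ a directed circuit. Let $\Part$ be the collection of these circuits; they are edge-disjoint and together cover $E$. Whenever two of them share a vertex $v$, one uses the pair (left in, left out) at $v$ and the other uses (right in, right out), so by definition they osculate rather than cross; the same argument applies to a circuit passing through $v$ twice. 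Thus $\Part$ is a partition with the required non-transversality property.

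For uniqueness, suppose $\mathcal{P}'$ is any partition of $E$ into edge-disjoint directed circuits with no two circuits (and, reading the statement in the natural way, no circuit with itself) crossing transversely. At each vertex $v$ the circuits of $\mathcal{P}'$ passing through $v$ induce a pairing of the two incoming edges with the two outgoing edges, and there are only two candidates: the canonical (osculating) one and the transverse one. If the transverse pairing were used at some $v$, the strand entering on the left and leaving on the right, together with the strand entering on the right and leaving on the left, would form a transverse intersection — of two distinct circuits of $\mathcal{P}'$ if they lie on different circuits, and a transverse self-intersection otherwise. Both are excluded, so $\mathcal{P}'$ uses the canonical transition at every vertex; but then $\sigma$ is completely determined, so $\mathcal{P}' = \Part$. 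Computing the rotation-based left/right labels and tracing the cycles of $\sigma$ each take one pass over $G$, so $\Part$ is found in linear time.

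I expect the only real obstacle — more a matter of care than of depth — to be pinning down the uniqueness argument: being explicit that ``no transverse intersection'' must also forbid a single circuit from crossing itself (otherwise one needs the extra step of showing that the transverse choice at a vertex cannot be harmlessly ``absorbed'' into one circuit), and verifying that loops and oppositely oriented parallel edges are correctly accounted for by the left/right bookkeeping at a vertex. Everything else follows mechanically once the local ``turn to the same side'' rule is established.
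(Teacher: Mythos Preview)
Your proof is correct and follows essentially the same approach as the paper: both rely on the ``stay on the same side'' rule at each vertex, with the paper phrasing it as an iterative edge-tracing procedure and you casting it as the cycle decomposition of a successor permutation~$\sigma$. Your treatment of uniqueness is in fact more careful than the paper's---the paper simply observes that ``there is never a choice about which next edge to take,'' whereas you correctly flag that the non-transversality condition must also be read as forbidding self-crossings for uniqueness to go through cleanly.
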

\begin{proof}
Arbitrarily select an edge $e_1$ of $G$ as the start of a circuit $P_1$.  If this edge is
left incoming at its head $v$, then let $e_2$ be the left outgoing edge at $v$, else
let $e_2$ be the right outgoing edge at $v$.  Put differently, $e_1$ and $e_2$
are on the ``same side'' of $v$.
Append $e_2$ to $P_1$ and repeat the operation at the head of $e_2$.
Since the graph is finite, we eventually must close
the circuit $P_1$; this is the first element of the partition $\Part$.
In fact, circuit $P_1$ must exactly finish at edge $e_1$, because for any
edge the rule of ``stay on the same side'' uniquely determines the edge before
and after in the circuit.

Now select some edge $f_1$ of $G$ that was not in $P_1$, and repeat the process
starting at $f_1$.  The new circuit $P_2$ will not contain an edge of $P_1$ by the same ``stay on the same side'' rule. Thus we obtain the next circuit in the partition.  Repeat until all edges
belong to some circuit.  Since there is never a choice about which next edge
to take, the partition is unique.
Each edge is visited exactly once resulting in a linear runtime.
\qed\end{proof}
We call the circuits in $\Part$ the {\em osculating circuits} of $G$,
see also Figure~\ref{fig:lace}(c).
We distinguish two cases of $\Part$ based on whether or not
the osculating circuits are simple directed cycles.  It turns
out that when a circuit visits a vertex twice, $\Part$ has a trivial structure.
\begin{lemma}
Presume (C1-C3) hold.
If some osculating circuit $P\in \Part$ visits a vertex twice,
then $P$ is the only element in $\Part$.
\label{lem:partcircuit}
\end{lemma}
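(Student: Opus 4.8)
The plan is to argue that once an osculating circuit $P$ revisits a vertex, the combinatorial structure forces $P$ to swallow every edge of $G$. Suppose $P\in\Part$ visits some vertex $w$ twice. Since $P$ is built by the ``stay on the same side'' rule, the two visits to $w$ must use the two \emph{different} sides of $w$: one passage enters and leaves on the left (using the left incoming and left outgoing edges), the other enters and leaves on the right. Hence all four edges incident to $w$ already belong to $P$. I would first record this as the base observation: a doubly-visited vertex has all of its incident edges in $P$.

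Next I would propagate this saturation along $P$. Consider a vertex $v\ne w$ on $P$. The circuit $P$ passes through $v$ on one side, say the left side, using the left incoming and left outgoing edges of $v$. I claim $P$ must then also pass through $v$ on the right side. To see this, follow $P$ forward from the left outgoing edge of $v$: this is a directed walk that has already returned to $w$ and closed up (that is how we know $P$ revisits $w$), so $P$ is a closed directed circuit containing $v$ on its left. The key topological input is (C3): $P$ is non-contractible, and together with (C3$'$) — outgoing arcs rotationally consecutive, so intersections are osculating rather than transverse — no edge-disjoint directed circuit can ``cross'' $P$. If the right side of $v$ were used by a \emph{different} circuit $P'\in\Part$, then $P$ and $P'$ would osculate at $v$; but I would show that an osculating pair of non-contractible directed circuits, one of which is known to be a single circuit visiting a vertex twice, cannot coexist on the torus without creating either a contractible directed circuit (contradicting C3) or a face that is not bounded by at least $3$ edges (contradicting C2) — essentially, the ``kissing'' configuration at $v$ pinches off a region, and by Lemma~\ref{lem:cycleFace}-style reasoning that region must contain a face violating our hypotheses. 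So the right side of $v$ is also used by $P$, and all four edges at $v$ lie in $P$.

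From here the argument closes quickly: every vertex on $P$ has all four incident edges in $P$, and since $G$ is connected (it is a cellular embedding, by C2) and every vertex is the head of some edge, every vertex of $G$ lies on $P$; hence $E(P)=E(G)$ and $P$ is the unique element of $\Part$.

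I expect the main obstacle to be the middle step — ruling out the coexisting osculating circuit $P'$ at $v$. The clean way to handle it is probably not a direct topological case analysis but a counting/extremal argument in the spirit of the proof sketch of Lemma~\ref{lem:cycleFace}: assume for contradiction that $P$ does not use both sides of some vertex, look at the region cut out on the torus by the ``pinch'' where $P$ touches itself at $w$, and derive that this region must contain a facial walk that is either a contractible directed circuit or has fewer than three edges. Getting the homotopy bookkeeping right (that a directed circuit touching itself once on the torus separates off a disk) is the delicate part; everything before and after it is routine.
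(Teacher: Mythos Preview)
Your proposal has the right opening observation (all four edges at the doubly-visited vertex $w$ lie in $P$) and the right closing step (once $P$ is the only circuit, connectedness finishes it). The gap is exactly where you flag it, and the gestures you make toward filling it do not work as stated.

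First, the ``pinch'' intuition is backward. When $P$ revisits $w$, the two passages each stay on one side (left-in/left-out and right-in/right-out), so as passages of $P$ they \emph{osculate} at $w$; an osculating touch separates nothing on the torus. What actually matters is that if you \emph{decompose} $P$ at $w$ into the simple cycle $C'$ (the part from the first visit of $w$ to the second) and the remaining circuit $C''$, then $C'$ uses the left-outgoing and right-incoming edges at $w$ while $C''$ uses the left-incoming and right-outgoing ones, so $C'$ and $C''$ meet \emph{transversely} at $w$. This transversality is the key structural fact, and your sketch does not identify it.

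Second, ``a directed circuit touching itself once on the torus separates off a disk'' is false: a single osculating self-touch separates nothing, and even a single transverse self-crossing need not bound a disk. The paper's argument is different: $C'$ is non-contractible by (C3), so cutting the torus along $C'$ yields a cylinder with $w$ split into a copy on each boundary; because $C''$ meets $C'$ transversely at $w$, a subpath $D\subset C''$ runs from one boundary of the cylinder to the other, and cutting further along $D$ gives disks. Any other $P_i\in\Part$ only osculates with $P$, hence cannot transversely cross $C'$ or $D$, so $P_i$ sits entirely inside one of these disks and is therefore contractible, contradicting (C3).

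Your propagation scheme (show every vertex on $P$ is visited twice) is logically equivalent to showing $P$ is the only circuit, so it needs an independent topological input of exactly this kind; the Lemma~\ref{lem:cycleFace}-style extremal argument you suggest does not supply it, because there is no contractible region being pinched off at $v$ or at $w$ on which to run it.
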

\begin{proof}
Follow $P_1\in\Part$, a non-simple osculating directed circuit, until the first time a vertex $v \in P_1$ is reached for the second time. $P_1$ can be partitioned into a simple directed cycle $C'$ (by taking the part from $v$ to $v$ that we just followed) and a directed circuit $C''$ (the rest of $P_1$). Both $C'$ and $C''$ are incident to $v$.

Fix an arbitrary drawing of $G$ on the torus $\mathcal{T}$.
By (C3) $C'$ is non-contractible,
so cutting $\mathcal{T}$ along $C'$ produces a cylinder.
The cut will split $v$ into two vertices, $v'$ and $v''$,
one on each boundary of the cylinder.

Because of the osculating construction of $P_1$, $C'$ and $C''$ must intersect transversely at $v$ (otherwise, $P_1$ would have terminated the first time it returned to $v$).
Thus $C''$ contains an incident edge at each of the two copies, $v'$ and $v''$, of $v$.  Taking the subpath $D$ of $C''$ between $v'$ and $v''$, we obtain a path that travels on the cylinder from one boundary to the other.  Cutting along $D$ will cut the cylinder into one or more disks.

Now consider some other circuit $P_i\in\Part$, $P_i\neq P_1$.  It cannot have a transverse
crossing with either $C'$ or $C''$, because $P_1$ and $P_i$ are osculating circuits.
Therefore, it intersects neither $C'$ nor $D$ and we can conclude that  $P_i$ resides entirely
within (or on the boundary) of one of the disks of $\mathcal{T}-C'-D$.
But then $P_i$ is a contractible directed circuit, a contradiction.  So
$\Part$ contains no osculating circuits other than $P_1$.
\qed\end{proof}

The osculating partition can contain more than one element, see
Figure~\ref{fig:lace}(c).  It follows from the previous lemma that if $|\Part| > 1$, then all circuits in $\Part$ must be simple directed cycles.

In a transverse intersection under (C3$'$), if $C_1$ uses the left incoming (and right outgoing) edge of $v$ we say that $C_1$ {\em crosses $C_2$ left-to-right} at $v$.
Summing over all shared vertices, we define the {\em algebraic crossing number} of two circuits $C_1$ and $C_2$ to be:
$$\hat{i}(C_1,C_2)=\#\{\text{$C_1$ crosses $C_2$ left-to-right}\}-\#\{\text{$C_1$ crosses $C_2$ right-to-left}\}$$

Finally, using the following well known lemma, we can make a statement about the homotopy class of the osculating circuits in $\Part$:
\begin{lemma}{\cite[p. 209]{stillwell}}
\label{lem:homClass}
Two closed simple curves $C,C'$ have $\hat{i}(C,C')=0$ if and only if they
belong to the same homotopy class.
\end{lemma}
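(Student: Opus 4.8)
The plan is to realise the torus as $\mathcal{T}=\mathbb{R}^2/\mathbb{Z}^2$ with universal cover $\mathbb{R}^2$, and to reduce the statement to the classification of essential simple closed curves on $\mathcal{T}$; by (C3) every circuit we care about is non-contractible, so we may assume $C$ and $C'$ are both essential. I would first record two preliminary facts. (i) The quantity $\hat{i}$ is a homotopy invariant that merely changes sign when the orientation of one curve is reversed, so the condition $\hat{i}(C,C')=0$ is orientation-free and ``belonging to the same homotopy class'' should be read as free homotopy of unoriented curves; after fixing orientations consistently, $\hat{i}(C,C')$ agrees with the homological intersection pairing of $[C],[C']\in H_1(\mathcal{T};\mathbb{Z})\cong\mathbb{Z}^2$, since an osculating intersection contributes $0$ to the signed crossing count (the two circuits can be pushed apart locally) while a transverse intersection contributes the sign appearing in the definition of $\hat{i}$. (ii) Because $\pi_1(\mathcal{T})$ is abelian, two loops are freely homotopic exactly when they are homologous.

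For the direction ``$\Leftarrow$'': if $C$ and $C'$ are freely homotopic then $[C]=\pm[C']$, and the homological intersection form on an orientable surface is skew-symmetric, so $\hat{i}(C,C')=\pm\langle[C],[C]\rangle=0$. For ``$\Rightarrow$'' I would invoke the classification of essential simple closed curves on the torus: each is isotopic to a linear curve $t\mapsto t\cdot(p,q)$ with $\gcd(p,q)=1$, its homology class is the primitive vector $\pm(p,q)$, and every primitive vector arises this way. Writing $[C]=\pm(p,q)$ and $[C']=\pm(r,s)$ with $(p,q),(r,s)$ primitive, a straightening isotopy makes both curves linear, and a direct count of intersection points of two linear curves on $\mathcal{T}$ gives $|\hat{i}(C,C')|=|ps-qr|$. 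Hence $\hat{i}(C,C')=0$ forces $ps-qr=0$, i.e.\ the two primitive integer vectors are parallel, which forces $(r,s)=\pm(p,q)$; thus $[C]=\pm[C']$ and $C,C'$ are freely homotopic.

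The crux, and the step I would not reprove, is the classification of simple closed curves on the torus — in particular the claim that an embedded essential loop has a \emph{primitive} homology class and is determined up to isotopy by it. This is precisely where simplicity is used: a non-simple loop can represent an imprimitive class such as $(2,0)$, and on surfaces of genus $\ge 2$ the conclusion of the lemma fails outright (there exist disjoint, hence $\hat{i}=0$, non-homotopic curves), so the statement is genuinely special to the torus. The classical argument lifts $C$ to $\mathbb{R}^2$: a lift $\widetilde{C}$ is a properly embedded line invariant under a rank-one subgroup of the deck group $\mathbb{Z}^2$; embeddedness of $C$ forces distinct lifts to be disjoint; this confines $\widetilde{C}$ to a bounded neighbourhood of an affine line, so the generating deck translation is a primitive vector $(p,q)$, and a standard straightening isotopy carries $C$ onto the linear curve of slope $(p,q)$. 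Since this is textbook material — it is exactly what underlies the cited page of Stillwell — the proof would quote it and then carry out only the two short deductions above.
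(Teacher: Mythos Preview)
The paper does not prove this lemma at all: it is stated with a bare citation to Stillwell and used as a black box. Your proposal therefore goes well beyond what the paper does, supplying the standard textbook argument (homological intersection form on $H_1(\mathcal{T};\mathbb{Z})\cong\mathbb{Z}^2$, classification of essential simple closed curves by primitive $(p,q)$, and the determinant computation $|\hat{i}|=|ps-qr|$). That argument is correct, and it is presumably close to what one would find at the cited page of Stillwell, so in spirit you are aligned with the paper's ``proof by reference.''

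You also catch two genuine imprecisions in the lemma as the paper states it, which is worth noting. First, the biconditional is literally false if either curve is allowed to be contractible (a null-homotopic simple curve has $\hat{i}=0$ with everything), and you correctly restrict to essential curves, justified here by (C3). Second, since $\hat{i}$ only changes sign under orientation reversal, the condition $\hat{i}(C,C')=0$ can at best pin down the \emph{unoriented} free-homotopy class; you make this explicit, whereas the paper leaves it tacit. In the paper's actual applications (Lemma~\ref{lem:pHomClass} and the Dehn--Lickorish step) the unoriented reading suffices, so this does not affect correctness downstream, but your treatment is more careful than the paper's statement.
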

\begin{lemma}If (C1-C3) hold, then all directed cycles $P_i\in\Part$ belong to the same homotopy class.
\label{lem:pHomClass}
\end{lemma}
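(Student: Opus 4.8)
The plan is to reduce everything to the algebraic crossing number and then invoke Lemma~\ref{lem:homClass}. First I would dispose of the degenerate case. If $|\Part|=1$, then either its unique element is not a simple directed cycle, in which case the statement is vacuous (there are no $P_i$ that are directed cycles), or it is a simple directed cycle, in which case there is nothing to prove. So assume $|\Part|>1$. Then by Lemma~\ref{lem:partcircuit} every $P_i\in\Part$ is a simple directed cycle, hence a simple closed curve when $G$ is drawn on the torus, and by (C3) each $P_i$ is non-contractible; these are precisely the hypotheses needed to apply Lemma~\ref{lem:homClass}.

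Next, fix two distinct cycles $P_i,P_j\in\Part$. They are edge-disjoint, being blocks of a partition of $E(G)$, and by the lemma constructing the osculating partition they have no transverse intersection: every vertex they share is an osculating intersection. I would then argue directly from the definition of $\hat{i}$ that $\hat{i}(P_i,P_j)=0$. Both terms in that definition count transverse crossings (one circuit entering from the left side of the shared vertex and leaving from the right, or vice versa); an osculating intersection, where both edges of $P_i$ lie on one side of the vertex, never produces such a crossing. Hence the two sets being counted are empty and $\hat{i}(P_i,P_j)=0$ (this also covers the case where $P_i$ and $P_j$ share no vertex at all).

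Applying Lemma~\ref{lem:homClass} to the simple closed curves $P_i$ and $P_j$ then yields that $P_i$ and $P_j$ belong to the same homotopy class. Since $i$ and $j$ were arbitrary and ``belonging to the same homotopy class'' is transitive, all cycles of $\Part$ lie in a single homotopy class, which is the claim.

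The step I would be most careful about is the last part of the second paragraph: the assertion that osculating intersections contribute nothing to $\hat{i}$. It is the only place where the specific combinatorics of (C3$'$) (the left/right, incoming/outgoing labelling at a vertex and the resulting dichotomy osculating versus transverse) enters, and it is what makes the otherwise purely topological Lemma~\ref{lem:homClass} applicable here. I would also flag that the reduction to simple cycles via Lemma~\ref{lem:partcircuit} is not cosmetic: Lemma~\ref{lem:homClass} is stated for simple closed curves, so the case distinction on $|\Part|$ is genuinely needed rather than a formality.
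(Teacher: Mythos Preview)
Your proposal is correct and follows essentially the same route as the paper: handle the case where some osculating circuit is non-simple via Lemma~\ref{lem:partcircuit}, observe that otherwise all $P_i$ are simple closed curves with only osculating intersections so $\hat{i}(P_i,P_j)=0$, and then invoke Lemma~\ref{lem:homClass}. Your write-up is more careful than the paper's (in particular your explicit justification that osculating intersections contribute nothing to $\hat{i}$ and your remark that simplicity is genuinely needed for Lemma~\ref{lem:homClass}), but the underlying argument is the same.
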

\begin{proof}
If $\Part$ contains a non-simple circuit then, by Lemma~\ref{lem:partcircuit}, it is the only member of $\Part$ and the claim holds trivially.
Otherwise, the osculating circuits of $\Part$ are all simple closed curves.
None of them intersect transversally, which means that $\hat{i}(P_i,P_j) = 0$ for all pairs of osculating circuits $i\neq j$.
This proves the result by Lemma~\ref{lem:homClass}.
\qed\end{proof}

A {\em (canonical) polygonal schema} for a toroidal graph consists of two
fundamental cycles (called the {\em meridian} $M$ and the {\em longitude} $L$ ) such that
$M$ and $L$ intersect in exactly one point.  Cutting $G$ along
the edges of $M\cup L$ will result in a topological disk.  A circuit $C$
belongs to {\em homotopy class} $(m, \ell)$ (with respect to a fixed polygonal schema)
if $\hat{i}(C,L)=m$ and $\hat{i}(C,M)=\ell$.

With these terms in place, we can now state the thread conservation property via the following constraint:
\begin{itemize}
\item[C4.] There exists a meridian $M$, a longitude $L$
and a partition $\Part$ of edges into osculating directed circuits
such that each circuit in the partition is in the $(1,0)$-homotopy class.
\end{itemize}

\vfigbegin
    \def\svgwidth{\textwidth}
\begingroup%
  \makeatletter%
  \providecommand\color[2][]{%
    \errmessage{(Inkscape) Color is used for the text in Inkscape, but the package 'color.sty' is not loaded}%
    \renewcommand\color[2][]{}%
  }%
  \providecommand\transparent[1]{%
    \errmessage{(Inkscape) Transparency is used (non-zero) for the text in Inkscape, but the package 'transparent.sty' is not loaded}%
    \renewcommand\transparent[1]{}%
  }%
  \providecommand\rotatebox[2]{#2}%
  \ifx\svgwidth\undefined%
    \setlength{\unitlength}{417.56995975bp}%
    \ifx\svgscale\undefined%
      \relax%
    \else%
      \setlength{\unitlength}{\unitlength * \real{\svgscale}}%
    \fi%
  \else%
    \setlength{\unitlength}{\svgwidth}%
  \fi%
  \global\let\svgwidth\undefined%
  \global\let\svgscale\undefined%
  \makeatother%
  \begin{picture}(1,0.20719037)%
    \put(0,0){\includegraphics[width=\unitlength,page=1]{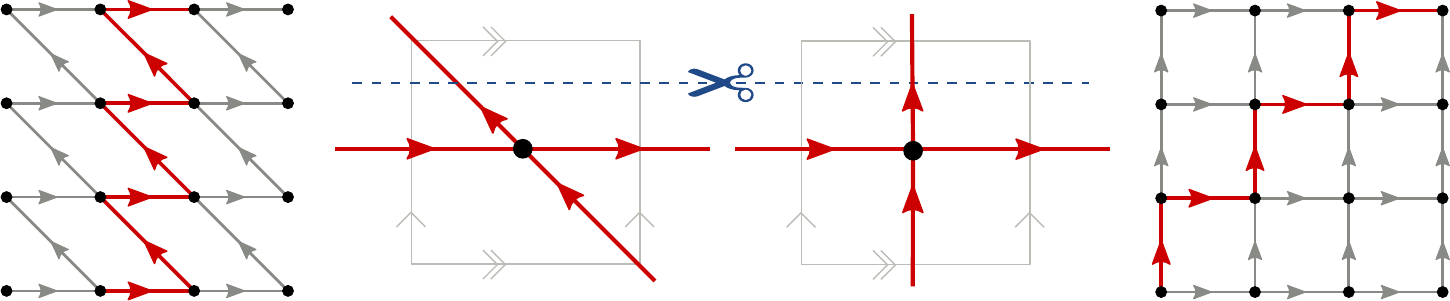}}%
  \end{picture}%
\endgroup%

\vfigend{fig:Dehn}{A Dehn twist changes a valid lace graph on left into an invalid one on right.}{Dehn twist}
The $(1,0)$-homotopy class restriction ensures that at each upward repeat all thread pairs return to the same left-right starting position.
The thread conservation property is impossible to formulate as a condition of the combinatorial embedding
because the homotopy class is affected by how the graph
is drawn on the torus.  In particular, consider Figure~\ref{fig:Dehn}
which shows two drawings of the same graph on the torus differing by a homeomorphism known as a Dehn twist.  Both drawings
have the same combinatorial embedding, yet on the left side of Figure~\ref{fig:Dehn}
the red (bold) osculating circuit returns to its starting point
while on the right side of the figure the osculating circuit drifts to the right.

Clearly, thread conservation demands that we fix more than the combinatorial embedding of the graph.
However, based only on the combinatorial embedding, we can make a statement about the existence of a suitable graph drawing:
\begin{lemma}
Given a digraph $G$ that satisfies (C1-C3), there
exists a drawing of $G$ for which (C4) holds.
\end{lemma}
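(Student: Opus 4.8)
The plan is to reduce (C4) to a single topological fact — that the common homology class of the osculating circuits is \emph{primitive} — and then read a polygonal schema off that fact. Fix any drawing of $G$ on the torus $\mathcal{T}$ (one exists by (C2)) and identify $H_1(\mathcal{T};\mathbb{Z})\cong\mathbb{Z}^2$; recall that the algebraic crossing number $\hat{i}$ is bilinear and alternating on homology and, in coordinates, is the unimodular form $\hat{i}((a,b),(c,d))=ad-bc$. Let $\Part=\{P_1,\dots,P_r\}$ be the (unique) osculating partition. By Lemma~\ref{lem:pHomClass} there is one class $h$ with $[P_i]=h$ for all $i$, and $h\neq 0$ since each $P_i$ is a directed circuit, hence non-contractible by (C3).

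The first substantive step is to prove that $h$ is primitive. If $r\ge 2$, then by Lemma~\ref{lem:partcircuit} every $P_i$ is a simple directed cycle, and a non-contractible simple closed curve on the torus carries a primitive class, so $h$ is primitive. If some $P_i$ is non-simple, then $\Part=\{P_1\}$ by Lemma~\ref{lem:partcircuit}, and I would reuse the decomposition from its proof: write $P_1=C'\cup C''$, where $C'$ is the simple cycle formed by the first return to the first revisited vertex $v$ and $C''$ is the rest. Here $C'$ is non-contractible, so $[C']$ is primitive; $C'$ and $C''$ cross transversally at $v$; and at every other common vertex they must osculate, because the two passes of $P_1$ through that vertex use opposite sides. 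Hence $\hat{i}(C',C'')=\pm 1$. Cutting $\mathcal{T}$ along $C'$ yields a cylinder, and a sub-arc $D$ of $C''$ joining its two boundary circles cuts the cylinder into a disk, so $\{[C'],[D]\}$ is a basis of $H_1(\mathcal{T})$ and $[P_1]=[C']+[C'']$ is a primitive integer combination of $[C']$ and $[D]$.

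With $h$ primitive, I would extend it to a basis $(h,\beta)$ of $H_1(\mathcal{T})$ with $\hat{i}(h,\beta)=1$ and realize $h$ (up to sign) by a simple closed curve $M$ (the meridian) and $\beta$ by a simple closed curve $L$ (the longitude) meeting transversally in exactly one point; this is always possible on the torus, e.g.\ by straight lines of the appropriate slopes on the flat torus. Then $\mathcal{T}\setminus(M\cup L)$ is an open disk, so $(M,L)$ is a canonical polygonal schema, and cutting the drawn torus along $M\cup L$ gives a plane drawing of $G$ inside a rectangular frame — this is the drawing the statement asks for. Finally, for every $i$ we have $\hat{i}(P_i,L)=\hat{i}(h,\beta)=1$ and $\hat{i}(P_i,M)=\hat{i}(h,[M])=0$, since $[M]=\pm h$ and $\hat{i}$ is alternating; thus every osculating circuit lies in the $(1,0)$-homotopy class with respect to $(M,L)$, which is exactly (C4).

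The main obstacle is the primitivity of $h$ in the degenerate case where an osculating circuit revisits a vertex: this rests on the bookkeeping of the decomposition $P_1=C'\cup C''$, in particular on the facts that $C'$ and $C''$ cross exactly once and that cutting along $C'$ and then along the cross-arc leaves a single disk (so that the remainder of $C''$ contributes nothing in homology). Everything afterwards is either already supplied by Lemmas~\ref{lem:partcircuit}--\ref{lem:pHomClass} or is standard torus topology. It is worth noting that the construction deliberately does \emph{not} require $M$ and $L$ to be cycles of $G$ itself: the one-vertex graph with two non-contractible loops satisfies (C1)--(C3), yet its unique osculating circuit has class $(1,1)$, which is parallel to neither of the only two cycles of $G$; so the freedom to choose the polygonal schema — equivalently, the drawing — is what makes the lemma true.
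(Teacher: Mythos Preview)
Your argument is correct. The paper proves this lemma in two lines: invoke the Dehn--Lickorish theorem to obtain a self-homeomorphism of the torus carrying a simple non-contractible cycle to the $(1,0)$ class, then use Lemma~\ref{lem:pHomClass} to conclude that all osculating circuits land there simultaneously. You reach the same endpoint by a more explicit and more elementary route: show directly that the common homology class $h$ is primitive, extend $h$ to a basis of $H_1(\mathcal{T})$, and realise that basis by a pair of simple closed curves meeting once.

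The two arguments are close in spirit---both boil down to the transitivity of the mapping class group on primitive classes---but yours is more self-contained and, notably, more complete. The paper's two-line proof tacitly assumes there is a \emph{simple} osculating cycle available to feed into Dehn--Lickorish; you instead treat the degenerate case $|\Part|=1$ with a non-simple $P_1$ by reusing the decomposition $P_1=C'\cup C''$ from the proof of Lemma~\ref{lem:partcircuit} and observing $\hat{i}(C',P_1)=\hat{i}(C',C'')=\pm1$, which already forces $[P_1]$ to be primitive. (The detour through the arc $D$ and ``primitive integer combination'' is not needed---that single crossing-number computation suffices, since any class pairing to $\pm1$ with something under a unimodular form is primitive.) Your closing remark that $M$ and $L$ need not be cycles of $G$ itself anticipates exactly what the paper does in Section~\ref{sec:draw}, where the polygonal schema is built in a supergraph.
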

\begin{proof}
By the Dehn-Lickorish theorem (see e.g.~\cite{farb2011primer})
there exists a homeomorphism that maps any simple, non-contractible cycle to the $(1,0)$ homotopy
class of the torus.  By Lemma~\ref{lem:pHomClass}, such a homeomorphism will map all elements in $\Part$ to the desired homotopy class.
\qed\end{proof}
The main contribution of this paper is a linear time algorithm for finding such a drawing:
\begin{theorem}
Given a digraph $G$ that satisfies (C1-C3), we can draw a lace pattern
in linear time.
The drawing resides in an $O(n^4) \times O(n^4)$-grid.
\label{theo:main}
\end{theorem}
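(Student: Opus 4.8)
The plan is to split the statement into three phases: constructing a canonical polygonal schema $(M,L)$ that realizes (C4), cutting $G$ open along $M\cup L$ into a planar graph with a rectangular outer face and drawing it with straight lines, and finally bounding the grid size and the running time. For the first phase I would compute the osculating partition $\Part$ in linear time via the lemma above; by Lemma~\ref{lem:pHomClass} all of its circuits share a single homotopy class $h$, which by (C3) is non-contractible. If $|\Part|>1$ every osculating circuit is a simple cycle, and I would take the meridian $M$ to be one of them, say $P_1$; the degenerate case in which $\Part$ consists of a single non-simple circuit is handled separately, using the simple non-contractible subcycle produced in the proof of Lemma~\ref{lem:partcircuit}. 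In the former case $\hat i(P_i,M)=0$ for every osculating circuit $P_i$ --- for $i\neq 1$ because distinct osculating circuits never cross transversely, and for $i=1$ because a simple cycle has no transverse self-intersection --- so that, whichever longitude $L$ we choose, each $P_i$ lands in homotopy class $\bigl(\hat i(P_i,L),0\bigr)$.

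It then remains to choose a longitude $L$ so that $(M,L)$ is a genuine polygonal schema: $M$ and $L$ meet in exactly one vertex and $M\cup L$ cuts the torus into a disk. For such a schema $\hat i(M,L)=\pm1$, and since $[P_i]=[M]$ we have $\hat i(P_i,L)=\hat i(M,L)$; reorienting $L$ if necessary makes this equal to $1$, which is precisely (C4). As $M$ is non-separating, cutting the torus along it yields a cylinder that still carries a cellular embedding, so there is an edge path across the cylinder between the two boundary copies of $M$ --- computable in linear time --- and closing it through the split endpoint on $M$ gives a candidate for $L$. The hard part is that in the given graph this candidate may fail to be a simple cycle, may meet $M$ in more than one vertex, and may leave the cut-open graph in a form the drawing algorithm cannot use (its outer face must be a genuine rectangle, free of obstructing chords, with the periodic boundary identifications realizable). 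I would resolve this the way the paper's ``offset graph'' construction does: augment $G$ to a linear-size supergraph $G^{+}$ by inserting auxiliary vertices and edges along faces so that a clean meridian and longitude become available, prove that this augmentation always succeeds under (C1--C3), and extract $(M,L)$ in $G^{+}$ --- the drawing of $G$ being recovered at the end by restriction. Establishing that such an $O(n)$-size augmentation always exists is, I expect, the crux of the whole proof.

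The second phase is comparatively routine. Cutting $G^{+}$ along $M\cup L$ produces, since $M\cap L$ is a single vertex and $M\cup L$ cuts to a disk, a planar graph whose outer face is a quadrilateral with sides, in cyclic order, a copy of $M$, a copy of $L$, the reversed copy of $M$, and the reversed copy of $L$. Now apply a grid straight-line drawing algorithm for planar graphs with a prescribed outer polygon --- for instance the canonical toroidal schema drawings of \cite{aleardi2012} or \cite{duncan}, or a canonical-ordering/Schnyder-wood drawing adapted to a fixed rectangular outer face --- placing the two copies of each meridian vertex at a common abscissa and the two copies of each longitude vertex at a common ordinate. These identifications are linear equalities on the boundary coordinates, hence compatible with such algorithms, and they are exactly the periodicity the lace model demands; restricting the resulting drawing to the vertices and edges of $G$ and tiling the rectangle by translation yields the doubly periodic drawing $\Gamma(G)$, and fixing any default braid word $\zeta(v)$ completes the lace pattern $(\Gamma(G),\zeta(v))$.

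For the third phase, every step --- computing $\Part$, building the augmentation, extracting $M$ and $L$, cutting, and running the drawing algorithm --- inspects each vertex and edge a constant number of times, so the procedure runs in linear time. For the grid bound, $G^{+}$ has $O(n)$ vertices and hence so does the cut-open graph; a straight-line planar drawing with a prescribed convex boundary fits in a polynomial grid, and the remaining polynomial factors come from forcing the outer face into a rectangle, from the boundary cycles $M$ and $L$ having length $\Theta(n)$ (so the rectangle is $\Omega(n)\times\Omega(n)$ and the interior must resolve on a correspondingly refined grid), and from the periodicity equalities on the boundary. Carrying these estimates through gives the claimed $O(n^4)\times O(n^4)$ bound; I expect this last bookkeeping to be routine, with essentially all of the genuine difficulty sitting in the schema-existence argument of the first phase.
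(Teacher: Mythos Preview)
Your plan matches the paper's: build a linear-size supergraph (the offset graph), extract $(M,L)$ there so that (C4) holds, cut to a rectangle, and apply the algorithm of Aleardi et~al.\ for the periodic $O(n^4)\times O(n^4)$ drawing. Your homological argument that $[P_i]=[M]$ forces $\hat i(P_i,L)=\hat i(M,L)=\pm 1$ is a clean substitute for the paper's explicit crossing count.

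One point deserves correction rather than mere comparison. You first take $M$ to be an osculating cycle $P_1$ in $G$ itself and retreat to the supergraph only when the longitude misbehaves, whereas the paper places \emph{both} $M$ and $L$ in the offset graph from the outset, as the shifted copy $\M{\PP}$. This is not incidental: the drawing step requires that each of $M$ and $L$ cross every edge of $G$ at most once (so that after cutting no original edge is broken twice by the frame), and the paper's longitude construction---routing through the $\LL{\,\cdot\,}$-copies, crossover-edges, and shortcut-edges---is engineered relative to the offset meridian $\M{\PP}$, so that $\LL{\PP}$ and $\M{\PP}$ never cross and the single $M$--$L$ intersection can be pinned to one specific crossover-edge. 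Taking $M\subset G$ would undermine both guarantees. You are right that the schema-existence argument is the crux; in the paper it rests on the explicit offset-graph construction together with Lemma~\ref{lem:connectingPath}, which produces a walk from the left side of $\PP$ to its right side with no transverse crossings via a leftmost-walk argument that genuinely uses (C2) and (C3). That lemma is precisely the piece your proposal leaves as a black box.
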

\begin{proof}
In Section~\ref{sec:draw}, we provide an
algorithm for finding a polygonal schema to satisfy (C4)
for any digraph $G$ that satisfies (C1-C3). We then use known algorithms (\cite{duncan}, \cite{aleardi2012}) for straight-line rectangular-frame drawings to create a lace pattern in the required time and space.
\qed\end{proof} 

\section{Finding a polygonal schema}
\label{sec:draw}

In general, finding a polygonal schema with vertex-disjoint interiors
is NP-hard~\cite{duncan}.  However, for the purpose of drawing the
lace-pattern, we do not need to find a polygonal schema within the given graph;
it suffices (and in fact, is preferable) to add vertices and edges
to the graph and find a polygonal schema within the additions.  In this manner, the original
edges of $G$ are not on the schema boundary giving more freedom to where they can be placed.
In this section we describe how to find such a supergraph with $O(n)$ nodes.
\begin{AlgBox}{DrawLacePattern Algorithm}{DrawLaceGraph}{alg:draw}
\begin{enumerate}[label=\Alph*.]
  \item Partition $G$ into a set $\Part$ of osculating circuits and select one directed circuit $\PP$ from the set.
  \label{step:partition}
  \item Create the \dual \D.
  \label{step:dual}
  \item Find a simple cycle $M$ in \D such that $\hat{i}(M,\PP) = 0$
	and $M$ intersects every edge of $G$ at most once.
  \label{step:findM}
  \item Find a simple cycle $L$ in \D such that $\hat{i}(L,\PP) = {\pm}1$,
    $M$ and $L$ intersect exactly once, and $L$ intersects every edge
	of $G$ at most once.
  \label{step:findL}
  \item Use existing torus-drawing techniques to draw $G$ on a rectangle with meridian $M$ and longitude $L$.
  \label{step:canonical}
\end{enumerate}
\end{AlgBox}

All steps are linear or constant time, and step~\ref{step:canonical} will create a drawing of
the required size, proving the theorem.
Step~\ref{step:partition} was explained
in Section~\ref{sec:osculating} already;
all other steps are explained below.

\textbf{Creating the offset graph:}
We first create an \dual $\D$ in which
we will choose a suitable meridian $M$ and longitude $L$.
Roughly speaking, $\D$ is obtained by creating two copies of every
osculating circuit $P$ in $\Part$, which we will represent as $\M{P}$ (used to find $M$) and $\LL{P}$ (used to find $L$).  In each copy, the circuits are separated (they do not share vertices as they do in $G$) and are simple cycles.  The copies lie on top of $G$ introducing crossings which we remove by inserting dummy vertices.  Finally, for the simple cycles of $\LL{P}$, we connect the two halves of each vertex, split in the process of separating osculating paths, by introducing ``crossover-edges''.

\vfigbegin
    \def\svgwidth{\textwidth}
\begingroup%
  \makeatletter%
  \providecommand\color[2][]{%
    \errmessage{(Inkscape) Color is used for the text in Inkscape, but the package 'color.sty' is not loaded}%
    \renewcommand\color[2][]{}%
  }%
  \providecommand\transparent[1]{%
    \errmessage{(Inkscape) Transparency is used (non-zero) for the text in Inkscape, but the package 'transparent.sty' is not loaded}%
    \renewcommand\transparent[1]{}%
  }%
  \providecommand\rotatebox[2]{#2}%
  \ifx\svgwidth\undefined%
    \setlength{\unitlength}{2029.52871094bp}%
    \ifx\svgscale\undefined%
      \relax%
    \else%
      \setlength{\unitlength}{\unitlength * \real{\svgscale}}%
    \fi%
  \else%
    \setlength{\unitlength}{\svgwidth}%
  \fi%
  \global\let\svgwidth\undefined%
  \global\let\svgscale\undefined%
  \makeatother%
  \begin{picture}(1,0.23868252)%
    \put(0,0){\includegraphics[width=\unitlength]{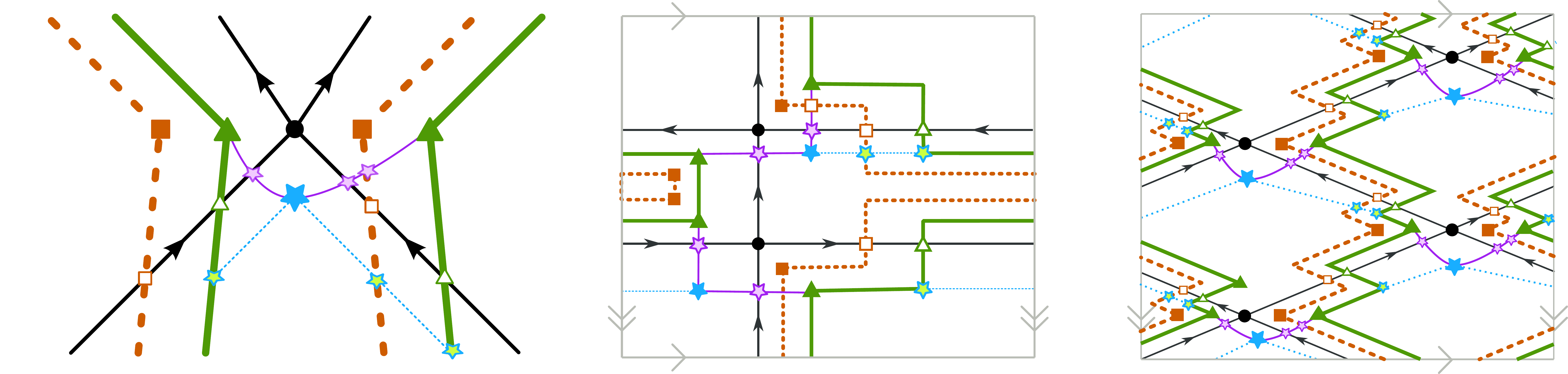}}%
    \put(0.36368074,0.11687801){\color[rgb]{0,0,0}\makebox(0,0)[lb]{\smash{\g{(b)}}}}%
    \put(0.01212085,0.11687801){\color[rgb]{0,0,0}\makebox(0,0)[lb]{\smash{\g{(a)}}}}%
    \put(0.68079013,0.11687801){\color[rgb]{0,0,0}\makebox(0,0)[lb]{\smash{\g{(c)}}}}%
    \put(0.17785812,0.18280759){\color[rgb]{0,0,0}\makebox(0,0)[lb]{\smash{$v$}}}%
    \put(0.05100356,0.16691056){\color[rgb]{0,0,0}\makebox(0,0)[lb]{\smash{\s{$\Ml{v}$}}}}%
    \put(0.11217149,0.16691056){\color[rgb]{0,0,0}\makebox(0,0)[lb]{\smash{\s{$\Ll{v}$}}}}%
    \put(0.25314831,0.16691056){\color[rgb]{0,0,0}\makebox(0,0)[lb]{\smash{\s{$\Mr{v}$}}}}%
    \put(0.31304791,0.16691056){\color[rgb]{0,0,0}\makebox(0,0)[lb]{\smash{\s{$\Lr{v}$}}}}%
    \put(0.18974848,0.07688567){\color[rgb]{0,0,0}\makebox(0,0)[lb]{\smash{\s{$\Sc{v}$}}}}%
  \end{picture}%
\endgroup%

\vfigend{fig:find_l_m}{Create \dual. (a) Close-up near a vertex showing $\M{P}$ (thick, dashed, square, orange), $\LL{P}$ (thick, solid, triangle, green), crossover-edges (thin, solid, purple) and shortcut-edges (thin, dotted, blue). (b) \Dual of non-simple osculating circuit.  (c)  \Dual with multiple simple osculating cycles.}{Create \dual}

Figure~\ref{fig:find_l_m}
illustrates two such toroidal graph embeddings  $\D$, with $O(n)$ vertices and edges, which we will now define formally. The {\em left-face} of a directed edge $v\rightarrow w$ is the face to the left of it while walking from $v$ to $w$.
For each vertex in $\D$ we define $f_{\ell}(v)$ to be the left-face of the left incoming and left outgoing edges of $v$, $f_{r}(v)$ to be the right-face of the right incoming and
right outgoing edges of $v$, and $f_{c}(v)$ to be the face incident to the left incoming and right incoming edges of $v$.
\begin{itemize}
\item Initially, $\D$ contains all vertices and edges of $G$, embedded as in $G$.
\item For every vertex $v\in V(G)$, add four new vertices $\Ml{v},\Ll{v},\Mr{v},\Lr{v}$.
    Place $\Ml{v},\Ll{v}$ in $f_{\ell}(v)$ such that $\Ll{v}$ is closer to the left incoming edge than $\Ml{v}$.
	Place $\Mr{v},\Lr{v}$ in $f_{r}(v)$ such that $\Mr{v}$ is closer to the right incoming edge than $\Lr{v}$.
\item For every edge $e=v\rightarrow w\in E(G)$, we add two new edges $\M{e}$ and $\LL{e}$.
	If $e$ is left outgoing at $v$, then $\M{e}$ starts at $\Ml{v}$ and $\LL{e}$ starts at $\Ll{v}$.
	else $\M{e}$ starts at $\Mr{v}$ and $\LL{e}$ starts at $\Lr{v}$,
	If $e$ is left incoming at $w$, then $\M{e}$ ends at $\Ml{w}$ and $\LL{e}$ ends at $\Ll{w}$,
	else $\M{e}$ ends at $\Mr{w}$ and $\LL{e}$ ends at $\Lr{w}$.
\item Note that edge $e\in E(G)$ may be intersected by its copies
	$\M{e}$ and $\LL{e}$.  This occurs, for example, when $e$ is right outgoing at its tail
	and left incoming at its head.  We remove these crossings (so that we
	again have a toroidal embedding) in the
	standard way by inserting dummy vertices that subdivide $e,\M{e},\LL{e}$.
	(In the following descriptions, we will ignore these dummy-vertices,
	and speak of an edge $e$, even though it has become a
	path with 3 edges.)
\item For every osculating circuit $P$ there are now two circuits $\M{P}$ and $\LL{P}$,
	using for each edge $e\in P$ the corresponding edges $\M{e}$ and $\LL{e}$.
	Note that $\M{P}$ and $\LL{P}$ are simple, even if $P$ is not.  When $P$ visits a
	vertex $v$ twice, once via the incoming left and outgoing left
	edges of $v$ and once via the incoming right and outgoing right edges of $v$,
    the corresponding $\M{P}$ visits $\Ml{v}$ and $\Mr{v}$ respectively and similarly $\LL{P}$ visits $\Ll{v}$ and $\Lr{v}$.
    Due to the order of the copies near $v$, $\M{P}$ and $\LL{P}$ do not cross.
\item Next add the {\em crossover-edge} $\LL{e_v}=(\Ll{v},\Lr{v})$  for each vertex $v\in V$.
	To obtain a toroidal embedding, route
	this edge so that it crosses three edges: the two incoming edges of $G$ at $v$ and the
	edge $\M{e}$ that is incoming to $\Mr{v}$.  These crossings are again
	replaced by dummy-vertices.
    In the crossover-edge insert a vertex $v_c$ in the face $f_c(v)$.
\item For a straight-line drawing, an edge $e$ in $E(G)$ must not cross the rectangular frame twice.
Consider an edge $\LL{e}$ that crosses $e=v\rightarrow w$ where $\LL{e}$ originates inside the face $f_c(w)$,
say $\LL{e}=\Lr{v}\rightarrow \Ll{w}$.
It may happen that the chosen longitude $L$ contains $\LL{e}$ followed by the crossover-edge $\LL{e_w}=(\Ll{w},\Lr{w})$ resulting in a double crossing of $e$ by $L$.
To avoid this situation, we add a {\em shortcut-edge} to $\D$ that connects a point on $\LL{e}$ inside $f_c(w)$ to $w_c$.
Note that any circuit using $\LL{e}\,\LL{e_w}$ acts the same (with respect to algebraic crossing numbers) as a circuit shortened
via the shortcut-edge, since all other crossings are unaffected.
\end{itemize}

\textbf{Finding the meridian:}
In step~\ref{step:partition} we selected an osculating circuit $\PP$.
Define $M$ to be the copy $\M{\PP}$ of circuit $\PP$ in the offset graph.
$M$ is a simple cycle.   It may cross $\PP$ repeatedly but it does
so only by switching back and forth between being left of $\PP$
and right of $\PP$.  In other words, $\hat{i}(\PP,M)=0$ as desired.
Finally $M$ intersects every edge of $G$ at most once by construction of $\M{P}$.

\textbf{Finding the longitude:}
Define $\mathcal{L}(G)$ to be the subgraph of $\D$ taking only the
copies $\LL{P}$ of osculating circuits, the crossover-edges, and the shortcut-edges.
We claim that we can find a suitable longitude in $\mathcal{L}(G)$.
The algorithm is quite simple (find a shortest path within $\mathcal{L}(G)$, with
some edges removed), but arguing that it works is not.

\textbf{Case 1) Circuit $\PP$:}
First consider the easier case in which $\PP$ is not simple and is
therefore, by Lemma~\ref{lem:partcircuit}, the only element in $\Part$.  Let $v$ be a vertex visited twice by $\PP$, hence $\Ll{v}$
and $\Lr{v}$ both belong to $\LL{\PP}$.
Define $L^\ast$ to be a subpath of $\LL{\PP}$ between $\Ll{v}$ and $\Lr{v}$ and
$\hat{L}$ to be $L^\ast$ plus the crossover-edge $(\Lr{v},\Ll{v})$.
We have constructed $\LL{\PP}$ such that it does not intersect $\M{\PP}$.  So $L^\ast$ does
not intersect $M$.  The crossover-edge $(\Lr{v},\Ll{v})$ intersects $M$ exactly once.  Therefore, $\hat{L}$ intersects $M$ exactly once as desired.

$L^\ast$ and $\PP$ may intersect numerous times between $\Ll{v}$ and $\Lr{v}$.
But $\Ll{v}$ is to the left of $\PP$ while $\Lr{v}$ is to the right, so in total $L^\ast$
has one more left-to-right intersection than right-to-left-intersection.
The crossover-edge $(\Lr{v},\Ll{v})$ intersects both incoming edges at $\LL{v}$,
and both of these edges belong to $\PP$.  This adds two more
right-to-left intersections, and so in total $\hat{i}(\hat{L},\PP)=-1$ as desired.

\textbf{Case 2) Simple $\PP$:}
If $\PP$ is simple then we find $L$ using a path that connects ``both sides''
of $\PP$ without crossing $\PP$.  The existence of such a path is non-trivial, and crucially needs
condition (C2), i.e., that the embedding is cellular.  Specifically, we show:
\begin{lemma}
\label{lem:connectingPath}
Presume (C1-C3) hold.
Let $P$ be a directed osculating cycle.  Then there exists a directed walk $W$
in $G$ that starts at a vertex $v\in P$ with a left outgoing edge
$e_1\not\in P$, ends at a vertex $w\in P$ with a right incoming edge
$e_k\not\in P$, and has no transverse intersection or shared edges with $P$.
\end{lemma}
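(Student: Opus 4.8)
The plan is to cut the torus along $P$ and turn the statement into a connectivity question on the resulting annulus.

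Fix a drawing of $G$ on the torus $\mathcal{T}$. By (C3) the cycle $P$ is non-contractible, so cutting $\mathcal{T}$ along $P$ yields an annulus $A$; since we cut only along edges, ``$G$ cut along $P$'' is a cellular embedding $G'$ on $A$, so every face of $G'$ is a disk --- this is where (C2) is needed. Let $\partial_L,\partial_R$ be the two boundary circles of $A$, one on each side of $P$. As $P$ is a simple osculating cycle, at every vertex it visits it uses either both left edges (call it a \emph{left-pair vertex}) or both right edges (a \emph{right-pair vertex}); inspecting the rotation, at a left-pair vertex the two non-$P$ edges lie on the right side of $P$, and at a right-pair vertex they lie on the left side. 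The reduction is then: it suffices to find a directed walk in $G'$ avoiding every edge of $P$, starting with the (free) left-outgoing edge of some right-pair vertex and ending with the (free) right-incoming edge of some left-pair vertex. Un-cutting, this is a walk in $G$ with the required properties: it shares no edge with $P$ by construction, and it has no transverse intersection with $P$ because at any vertex of $P$ a $P$-avoiding walk can only use the unique non-$P$ incoming and non-$P$ outgoing edge, and these lie together on one side of $P$ --- so every passage through a vertex of $P$ is an osculating touch, never a transverse one.

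Suitable endpoints exist: if $P$ consisted only of left-pair vertices, then all non-$P$ edges incident to $P$ would lie on one fixed side of $P$, so after cutting one of $\partial_L,\partial_R$ would receive no edge of $G'$ into the interior of $A$, forcing an annular (non-disk) face and contradicting cellularity; symmetrically $P$ is not all right-pair, so it has a vertex of each type, yielding a candidate first edge $e_1$ and a candidate last edge $e_k$ (in particular $|\Part|\ge 2$, as a lone osculating circuit would revisit a vertex and fail to be simple). For the walk itself I would argue via weak connectivity. Since $P$ is simple it uses exactly one incoming and one outgoing edge at each of its vertices, so $H:=G\setminus E(P)$ has equal in- and out-degree at every vertex; hence each weakly connected component of $H$ is Eulerian, so within it any two edges lie on a common directed walk. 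Observe $H=\bigcup_{Q\in\Part\setminus\{P\}}Q$, and that any $Q\ne P$ meets $P$ only at osculating vertices: at a right-pair vertex of $P$ that it passes through, $Q$ traverses a candidate first edge, and at a left-pair vertex of $P$, a candidate last edge. So if some single circuit $Q$ meets $P$ at both a right-pair and a left-pair vertex, the arc of $Q$ between the two edges is the desired walk; more generally, it suffices that some candidate $e_1$ and some candidate $e_k$ lie in one weakly connected component of $H$.

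I expect the main obstacle to be the remaining case, in which the circuits meeting $P$ break into a family touching only its right-pair vertices (hugging $\partial_L$) and a family touching only its left-pair vertices (hugging $\partial_R$), with no $H$-path between them. To rule this out I would again exploit cellularity of $G'$ on $A$: going around $\partial_L$, the faces of $G'$ between consecutive free edges are disks, and following their boundaries inward one is forced --- finitely many faces, none of which is an annulus --- to reach a face incident to $\partial_R$; tracing that face's boundary then produces a $P$-avoiding path joining a candidate $e_1$ to a candidate $e_k$, which puts them in one weak component of $H$. Making this face-chasing argument rigorous --- in particular, guaranteeing that it terminates at $\partial_R$ and delivers an $H$-path rather than one forced to run along a boundary of $A$ --- is where I expect the real difficulty to lie, and where I would be most careful.
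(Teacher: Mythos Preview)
Your approach is genuinely different from the paper's, and---up to the step you flag---sound. The paper does not cut along $P$ or argue via Eulerian connectivity at all. Instead it proves an auxiliary fact: every \emph{leftmost cycle} $C$ (one whose every edge is left-outgoing at its tail) has $\hat{i}(C,P)\neq 0$ for every osculating circuit $P\in\Part$. This holds because some left-incoming edge at a vertex of $C$ is not on $C$, and the osculating circuit through that edge must eventually exit $C$ via a right-outgoing edge, so it crosses $C$ left-to-right; it can never cross right-to-left since every left-outgoing edge at a vertex of $C$ lies on $C$. With that lemma in hand, the paper simply walks: start from a left-outgoing edge $e_1\notin P$ at some $v\in P$ (which exists because $P$ does not bound a face) and always take the left-outgoing edge. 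If the walk ever reaches $P$ via a right-incoming edge, stop---this is $W$, and every earlier contact with $P$ was via a left-incoming edge and hence osculating. If instead the walk closes into a leftmost cycle $C$ first, every contact of $C$ with $P$ is osculating, so $\hat{i}(C,P)=0$, contradicting the lemma. This is short and explicitly constructive, which fits the paper's algorithmic use of the result.

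Your cut-and-Euler route also works, and the step you worry about is easier than you suspect---no face-chasing is needed. Once you cut along $P$, the faces of $G'$ on the annulus are exactly the faces of $G$ on the torus, hence all disks by (C2). If no $H$-path joined the two boundary circles, then $G'$ itself would have no path between them, since the only edges of $G'$ outside $H$ are the two boundary copies of $P$, each confined to its own circle; a non-contractible core curve of the annulus would then lie in a single face of $G'$, contradicting cellularity. That one line puts a right-pair vertex and a left-pair vertex in the same weak component of $H$, and your Euler-circuit argument (noting that each $P$-vertex has in- and out-degree exactly~$1$ in $H$, so the circuit really does start with $e_1$ and first reaches $w$ along $e_k$) finishes cleanly. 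What your argument buys is a tidy separation of the topological input (cellularity forces the two sides of $P$ to be joined) from the combinatorial input (balanced degrees force a directed walk); what the paper's argument buys is an explicit walk with no detour through connectivity, and an auxiliary lemma about leftmost cycles that is of independent interest.
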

\begin{proof}
(Sketch)  The edge $e_1$ must exist otherwise the directed cycle $P$ would bound a
face, contradicting (C2) or (C3).  One can
now argue that starting from $e_1$ and always taking the left outgoing
edge, we must reach such an edge $e_k$ or find a contradiction
to (C2).  Details are in the appendix.
\qed\end{proof}

Let $Q$ be such a walk in $G$ from $v\in \PP$ to $w\in \PP$, shortened by
eliminating directed cycles (if any) so that it becomes a simple path.
We obtain the longitude
$L$ by ``translating'' $Q$ into $\mathcal{L}(G)$ and adding a subpath of $\LL{\PP}$.

Note that the directed edges of $Q$ need not be rotationally consecutive.
For example, $e_i\in E(Q)$ may be right incoming at its head vertex $x$ while $e_{i+1}~\in~E(Q)$ may be left outgoing at its tail vertex $x$.
If $x\not\in \PP$ then we can can add the crossover-edge $(\Ll{x},\Lr{x})$ to connect $\LL{e_i}$ and $\LL{e_{i+1}}$ without crossing $M=\M{\PP}$.  If $x\in\PP$ but $e_{i}\not\in\PP$ and $e_{i+1}\not\in\PP$ then no crossover-edge is required because $e_{i}$ and $e_{i+1}$ are rotationally consecutive.  It is not possible to have $x\in\PP$ and only one of $e_{i}\in \PP$ or $e_{i+1}\in \PP$ because of the way in which $Q$ is defined.

Formally, let $Q=e_1,\dots,e_k$ be a simple path in $G-E(\PP)$ from $v\in \PP$ to $w\in \PP$  and let $L^{-}=\LL{e_1},\dots,\LL{e_k}$
be a simple path using the corresponding edges in $\mathcal{L}(G)$ and crossover-edges as needed.
$L^{-}$ begins at $\Ll{v}$ since $e_1$ is left outgoing and ends at $\Lr{w}$
since $e_k$ is right incoming. Define $L^{\ast}$ to be the subpath of $\LL{\PP}$ connecting $\Ll{w}$ to $\Lr{v}$. Finally, define $\hat{L}$ to be the simple cycle consisting of $L^{-}$, crossover-edge $(\Lr{w},\Ll{w})$, $L^{\ast}$ and crossover-edge $(\Lr{v},\Ll{v})$.

Note that $L^{-}$ and $L^{\ast}$ never cross the meridian $M=\M{\PP}$.
The only place where $\hat{L}$ can intersect $M$ is at
the two crossover-edges $(\Lr{v},\Ll{v})$ and $(\Lr{w},\Ll{w})$.   We claim that
exactly one of them intersects $\M{\PP}$.  To see this, recall that $\PP$ uses the
right incoming and outgoing edge at $v$ and the left incoming and outgoing
edge at $w$.   The order of vertices in the vicinity of $v$ is $\Ml{v},\Ll{v},v,\Mr{v},\Lr{v}$,
and $\M{\PP}$ uses $\Mr{v}$, so $(\Lr{v},\Ll{v})$ crosses $\M{\PP}$.  On the other hand
the order of vertices in the vicinity of $w$ is $\Ml{w},\Ll{w},w,\Mr{w},\Lr{w}$,
and $\M{\PP}$ uses $\Ml{w}$, so $(\Lr{w},\Ll{w})$ does not cross $\M{\PP}$.    Therefore, $L$ and $M$
cross exactly once as desired.

To see that $\hat{i}(\hat{L},\PP)={\pm}1$, observe that $L^{-}$ has no transverse intersections
with $\PP$ and so contributes nothing.
$L^\ast$ may intersect $\PP$ repeatedly, alternatingly from left to right and
from right to left, however, $L^\ast$ starts on the left side of $\PP$ at $\Ll{w}$ and ends on the right side of $\PP$ at $\Lr{v}$, so it has a net of one left to right crossing.  The crossover-edges
$(\Lr{w},\Ll{w})$ and $(\Lr{v},\Ll{v})$ are both right to left crossings.  Thus $\hat{L}$ has one more crossing from right to left than it has crossings from left to right, yielding $\hat{i}(\hat{L},\PP)=-1$ as desired.

\vfigbegin
    \def\svgwidth{\textwidth}
\begingroup%
  \makeatletter%
  \providecommand\color[2][]{%
    \errmessage{(Inkscape) Color is used for the text in Inkscape, but the package 'color.sty' is not loaded}%
    \renewcommand\color[2][]{}%
  }%
  \providecommand\transparent[1]{%
    \errmessage{(Inkscape) Transparency is used (non-zero) for the text in Inkscape, but the package 'transparent.sty' is not loaded}%
    \renewcommand\transparent[1]{}%
  }%
  \providecommand\rotatebox[2]{#2}%
  \ifx\svgwidth\undefined%
    \setlength{\unitlength}{1686.690625bp}%
    \ifx\svgscale\undefined%
      \relax%
    \else%
      \setlength{\unitlength}{\unitlength * \real{\svgscale}}%
    \fi%
  \else%
    \setlength{\unitlength}{\svgwidth}%
  \fi%
  \global\let\svgwidth\undefined%
  \global\let\svgscale\undefined%
  \makeatother%
  \begin{picture}(1,0.30642567)%
    \put(0,0){\includegraphics[width=\unitlength]{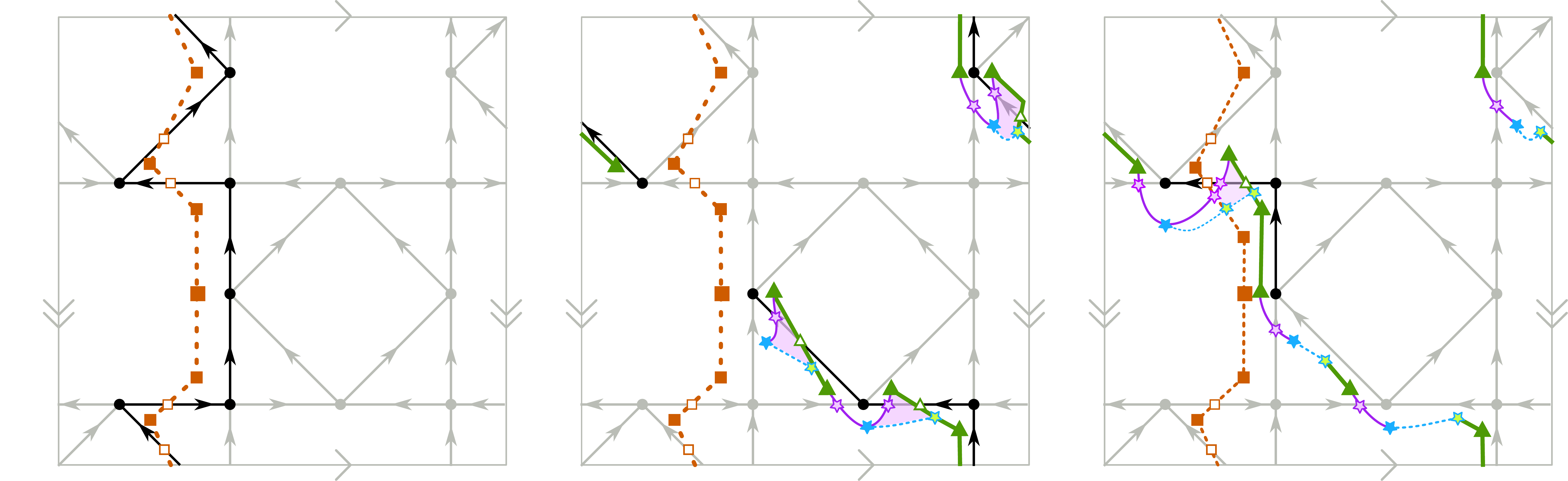}}%
    \put(0.11959695,0.3003865){\color[rgb]{0,0,0}\makebox(0,0)[lb]{\smash{$\PP$}}}%
    \put(0.06968963,0.17409524){\color[rgb]{0,0,0}\makebox(0,0)[lb]{\smash{$v$}}}%
    \put(0.15222333,0.11701865){\color[rgb]{0,0,0}\makebox(0,0)[lb]{\smash{$w$}}}%
    \put(0.08669228,0.3003865){\color[rgb]{0.80784314,0.36078431,0}\makebox(0,0)[lb]{\smash{$M$}}}%
    \put(0.7375655,0.17409524){\color[rgb]{0,0,0}\makebox(0,0)[lb]{\smash{$v$}}}%
    \put(0.8254351,0.11790797){\color[rgb]{0,0,0}\makebox(0,0)[lb]{\smash{$w$}}}%
    \put(0.93938225,0.30033121){\color[rgb]{0.30588235,0.60392157,0.02352941}\makebox(0,0)[lb]{\smash{$L$}}}%
    \put(0.75450887,0.3003865){\color[rgb]{0.80784314,0.36078431,0}\makebox(0,0)[lb]{\smash{$M$}}}%
    \put(0.40407222,0.17409524){\color[rgb]{0,0,0}\makebox(0,0)[lb]{\smash{$v$}}}%
    \put(0.38031462,0.2007747){\color[rgb]{0.30588235,0.60392157,0.02352941}\makebox(0,0)[rb]{\smash{\s{$\LL{v_\ell}$}}}}%
    \put(0.38994887,0.21590153){\color[rgb]{0.05490196,0.1372549,0.18039216}\makebox(0,0)[lb]{\smash{\s{$e_1$}}}}%
    \put(0.47415551,0.1312477){\color[rgb]{0,0,0}\makebox(0,0)[lb]{\smash{$w$}}}%
    \put(0.55516867,0.07666767){\color[rgb]{0.05490196,0.1372549,0.18039216}\makebox(0,0)[rb]{\smash{\s{$e_k$}}}}%
    \put(0.50234918,0.11790797){\color[rgb]{0.30588235,0.60392157,0.02352941}\makebox(0,0)[lb]{\smash{\s{$\LL{w_r}$}}}}%
    \put(0.42101559,0.3003865){\color[rgb]{0.80784314,0.36078431,0}\makebox(0,0)[lb]{\smash{$M$}}}%
    \put(0.01367303,0.15004835){\color[rgb]{0,0,0}\makebox(0,0)[lb]{\smash{\g{(a)}}}}%
    \put(0.34706087,0.15004835){\color[rgb]{0,0,0}\makebox(0,0)[lb]{\smash{\g{(b)}}}}%
    \put(0.68055409,0.15004835){\color[rgb]{0,0,0}\makebox(0,0)[lb]{\smash{\g{(c)}}}}%
  \end{picture}%
\endgroup%

\vfigend{fig:find_L}{(a) $M$ is a shifted copy of $\PP$.  (b) $L$ exits from the left side of $M$ at $e_{1}$ and returns on the right side of $M$ at $e_{k}$.  Loops where shortcuts are required are shaded purple.  (c) Connect $e_1$ to $e_k$ following $\PP$.}{Find $L$}

\textbf{Applying shortcuts:}
As desired, the simple cycle $\hat{L}$ intersects $M$ once
and has $\hat{i}(\hat{L},\PP)=\pm 1$.  However,
$\hat{L}$ may intersect an edge $e$ of $G$ repeatedly. This happens
only when $\LL{e}$ transversely crosses $e$, and a crossover-edge at the tail
of $\LL{e}$ follows immediately after.  Let $L$ be the simple cycle obtained from
$\hat{L}$ by substituting the shortcut-edge at any such edge $e$.  This has the same
algebraic crossing number with $\PP$, still crosses $M$ exactly once, and
is a suitable longitude.

We finally remark that the simplest way to find $L$ is as follows.
Start with graph $\mathcal{L}(G)$ and remove all dummy-vertices that lie on $\M{\PP}$.  This effectively
disallows any path in $\mathcal{L}(G)$ that crosses $M$.  Pick any vertex $v\in \PP$
and, in what remains of $\mathcal{L}(G)$, find a shortest path
$L^{-}$ from $\Ll{v}$ to $\Sc{v}$ (the vertex within
the crossover-edge $(\Lr{v},\Ll{v})$). Finally add the edge $(\Sc{v},\Ll{v})$ to close the cycle
into a suitable longitude $L$.

\textbf{Drawing the graph:}
Let $\DT$ be the graph obtained from $\D$ by removing all edges that do not belong
to $G, M$ or $L$ and smoothing any degree-2 vertices. $\DT$ has a natural embedding on a torus with a well defined meridian and longitude.
Cut $\DT$ along $L$ to form graph $\DC$ which
has a natural embedding on a cylinder with top and bottom boundaries $L_t$ and $L_b$.
Cut $\DC$ along $M$ to obtain a planar graph
$\DR$ with a natural embedding on a rectangle whose top/right/bottom/left
sides are formed by the four boundary-paths $L_t/M_r/L_b/M_{\ell}$.
Figure~\ref{fig:find_l_m_2} in the appendix illustrates this process.

In our construction, we have been careful to ensure that boundary-cycles do not cross an edge of $G$ more than once.
As a consequence, we can now apply known algorithms for straight-line rectangular-frame drawings to $\DR$.  The first such
algorithm was given by Duncan~et~al.~\cite{duncan} and creates
a drawing in an $O(n)\times O(n^2)$-grid in linear time.  Unfortunately, the
algorithm does not necessarily produce a periodic drawing.
We therefore turn to the drawing algorithm of
Aleardi~et~al.~\cite{aleardi2014}, a modification of the algorithm of Duncan~et~al.
which ensures periodic drawings.  This is still achieved in linear time
but at the cost of increasing the grid-size to $O(n^4)\times O(n^4)$-grid.
This proves Theorem~\ref{theo:main}.

\section{Discussion and open problems}

In this paper we provide an algorithm to generate a lace pattern given
a directed graph and a fixed combinatorial embedding that satisfy the
conditions (C1-C3).
Our main challenge was to produce a graph drawing such that the threads
in the pattern weave back and forth within a fixed width (follow a directed circuit that wraps once around the minor meridian axis and zero times around the major longitude axis).  It turns out that if (C1-C3) are satisfied
we can easily define a supergraph of linear size supporting a suitable
rectangular schema and extract the schema in linear time.  With care, we can sure that neither the meridian nor longitude of the schema intersects an edge of the graph twice, allowing us to reuse existing graph drawing techniques.

Our drawings are still somewhat unsatisfying for use as lace patterns.  Consider an edge $e \in E(G)$ that crosses the rectangular
schema, say the horizontal border $L$.  The edge is broken into two parts $e_t$ and $e_b$, where $e_t$ intersects the top of the rectangle along $L_t$ and $e_b$ intersects the bottom along $L_b$.
The algorithm of Aleardi~et~al.~\cite{aleardi2014}
ensures that the end points of $e_t$ and $e_b$ on $L$ have the same $x$-coordinate but
for $e$ to be truly seen as ``one edge'', $e_t$ and $e_b$ should also have
the same slope.  The bend is clearly visible when viewing several tiled repeats of the pattern as shown in
Figure~\ref{fig:bends}.
At the current time, we cannot see a way to simultaneously satisfy the topological $(1,0)$ homotopy constraint enforced by the rectangular representation and provide smooth continuity of edges across this rectangular boundary.

\vfigbegin
    \def\svgwidth{0.8\textwidth}
\begingroup%
  \makeatletter%
  \providecommand\color[2][]{%
    \errmessage{(Inkscape) Color is used for the text in Inkscape, but the package 'color.sty' is not loaded}%
    \renewcommand\color[2][]{}%
  }%
  \providecommand\transparent[1]{%
    \errmessage{(Inkscape) Transparency is used (non-zero) for the text in Inkscape, but the package 'transparent.sty' is not loaded}%
    \renewcommand\transparent[1]{}%
  }%
  \providecommand\rotatebox[2]{#2}%
  \ifx\svgwidth\undefined%
    \setlength{\unitlength}{2139.57840567bp}%
    \ifx\svgscale\undefined%
      \relax%
    \else%
      \setlength{\unitlength}{\unitlength * \real{\svgscale}}%
    \fi%
  \else%
    \setlength{\unitlength}{\svgwidth}%
  \fi%
  \global\let\svgwidth\undefined%
  \global\let\svgscale\undefined%
  \makeatother%
  \begin{picture}(1,0.2610527)%
    \put(0,0){\includegraphics[width=\unitlength,page=1]{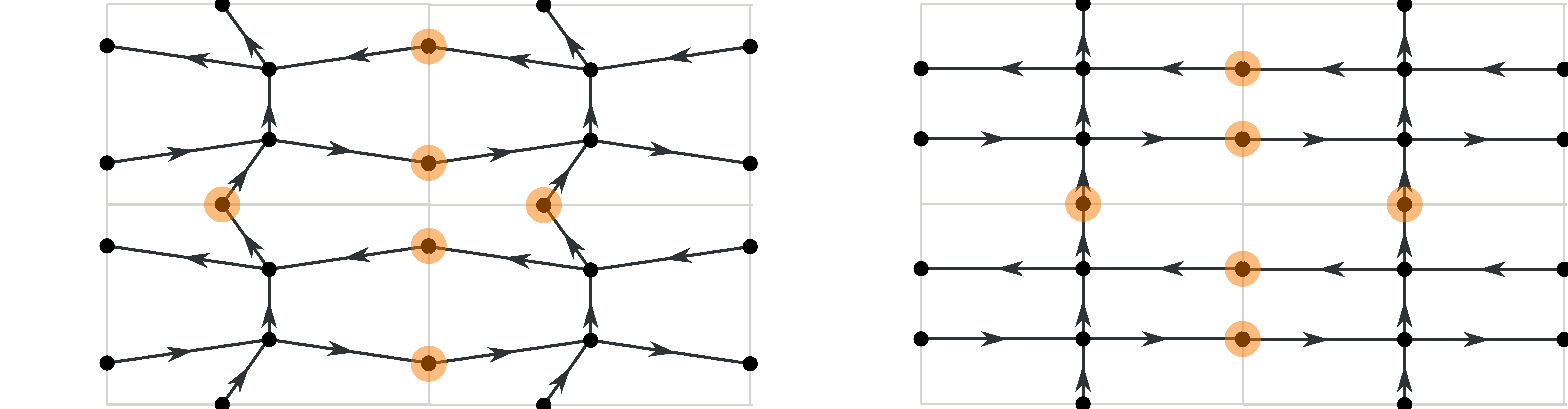}}%
    \put(-0,0.1234643){\color[rgb]{0,0,0}\makebox(0,0)[lb]{\smash{\g{(a)}}}}%
    \put(0.51917549,0.1239062){\color[rgb]{0,0,0}\makebox(0,0)[lb]{\smash{\g{(b)}}}}%
  \end{picture}%
\endgroup%

\vfigend{fig:bends}{Unwanted bends along border. (a) Result of algorithm. (b) Desired result.}{Unwanted bends}


\newpage
\bibliographystyle{splncs03}

\bibliography{lacegraph} 

\newpage
\begin{appendix}
\section*{Appendix A: Introduction to bobbin lace}

Perhaps the easiest way to understand bobbin lace is to look at the six step process by which it is made.
\vfigbegin
    \def\svgwidth{0.7\textwidth}
\begingroup%
  \makeatletter%
  \providecommand\color[2][]{%
    \errmessage{(Inkscape) Color is used for the text in Inkscape, but the package 'color.sty' is not loaded}%
    \renewcommand\color[2][]{}%
  }%
  \providecommand\transparent[1]{%
    \errmessage{(Inkscape) Transparency is used (non-zero) for the text in Inkscape, but the package 'transparent.sty' is not loaded}%
    \renewcommand\transparent[1]{}%
  }%
  \providecommand\rotatebox[2]{#2}%
  \ifx\svgwidth\undefined%
    \setlength{\unitlength}{1791.6bp}%
    \ifx\svgscale\undefined%
      \relax%
    \else%
      \setlength{\unitlength}{\unitlength * \real{\svgscale}}%
    \fi%
  \else%
    \setlength{\unitlength}{\svgwidth}%
  \fi%
  \global\let\svgwidth\undefined%
  \global\let\svgscale\undefined%
  \makeatother%
  \begin{picture}(1,0.33495925)%
    \put(0,0){\includegraphics[width=\unitlength]{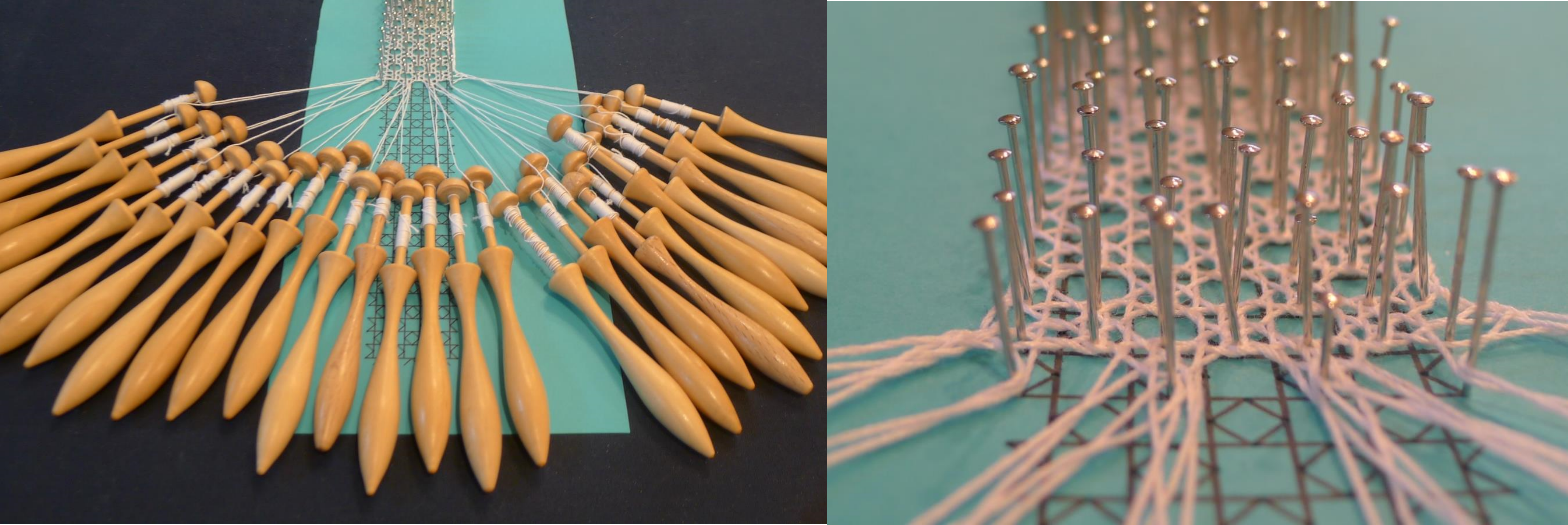}}%
  \end{picture}%
\endgroup%

\vfigend{fig:laceMaking}{Pattern in progress on lace pillow.}{Making lace}

\textbf{Step 1) Prepare the threads.}  To manage many long threads without creating a tangled mess, each end of a thread is wound onto one of a pair of bobbins.  A bobbin, commonly made from wood, is about 10cm in length.  One end is flanged to hold a length of thread and the other end, usually thicker and sometimes weighted with beads, is the handle.

\textbf{Step 2) Prepare the pattern.}  The lace is worked on top of a firm pillow stuffed with material that can hold pins (such as styrofoam or sawdust).   As threads are braided together, they are held in place by pins pushed into the pillow.  To start a piece of lace, a pattern is copied onto stiff material such as card stock.  Black dots in the pattern represent the position of pins.  Before starting to make the lace, all of these dots are pricked through to make small holes.  The pattern is then pinned to the pillow.

\textbf{Step 3) Hang the bobbins.} The middle of each thread is draped around an anchoring pin at the top of the pattern with the pair of bobbins hanging down on either side. Often the first row of the pattern is a simple weave to anchor the threads.

\textbf{Step 4) Braid and pin.} The lacemaker braids the threads working with four consecutive threads at a time.
During a sequence of braiding actions, the lacemaker may insert a pin to hold the braid in place.  The pin provides resistance so that the lacemaker can apply tension to an individual thread without distorting its neighbours.

\textbf{Step 5) Advance to next set.} Once four threads have been braided and pinned, the lace maker moves on to braid another set of four consecutive threads.  This new set of threads may include two threads from the previous set but it may also be formed from four completely new threads.  Which four threads comprise the next set depends on the pattern.

Steps 4 and 5 are repeated until the lace pattern is completed.

\textbf{Step 6) Finish.} The threads are secured (sometimes with a knot, sometimes by weaving them back into the lace) and trimmed off.  The pins are removed and the lace may be lifted off the pillow.  The finished lace is held together by the over and under crossings of the threads and friction.

\section*{Appendix B: Proof of Lemma~\ref{lem:cycleFace}}

\noindent{\bf Lemma~\ref{lem:cycleFace}.}
{\it
Presume a $2,2$-regular digraph has a toroidal cellular embedding $G$.
If $G$ has a contractible directed circuit $C$,
then $G$ will have at least one face bounded by a contractible directed circuit.
}
\vfigbegin
    \def\svgwidth{0.9\textwidth}
\begingroup%
  \makeatletter%
  \providecommand\color[2][]{%
    \errmessage{(Inkscape) Color is used for the text in Inkscape, but the package 'color.sty' is not loaded}%
    \renewcommand\color[2][]{}%
  }%
  \providecommand\transparent[1]{%
    \errmessage{(Inkscape) Transparency is used (non-zero) for the text in Inkscape, but the package 'transparent.sty' is not loaded}%
    \renewcommand\transparent[1]{}%
  }%
  \providecommand\rotatebox[2]{#2}%
  \ifx\svgwidth\undefined%
    \setlength{\unitlength}{594.98745117bp}%
    \ifx\svgscale\undefined%
      \relax%
    \else%
      \setlength{\unitlength}{\unitlength * \real{\svgscale}}%
    \fi%
  \else%
    \setlength{\unitlength}{\svgwidth}%
  \fi%
  \global\let\svgwidth\undefined%
  \global\let\svgscale\undefined%
  \makeatother%
  \begin{picture}(1,0.21361795)%
    \put(0,0){\includegraphics[width=\unitlength]{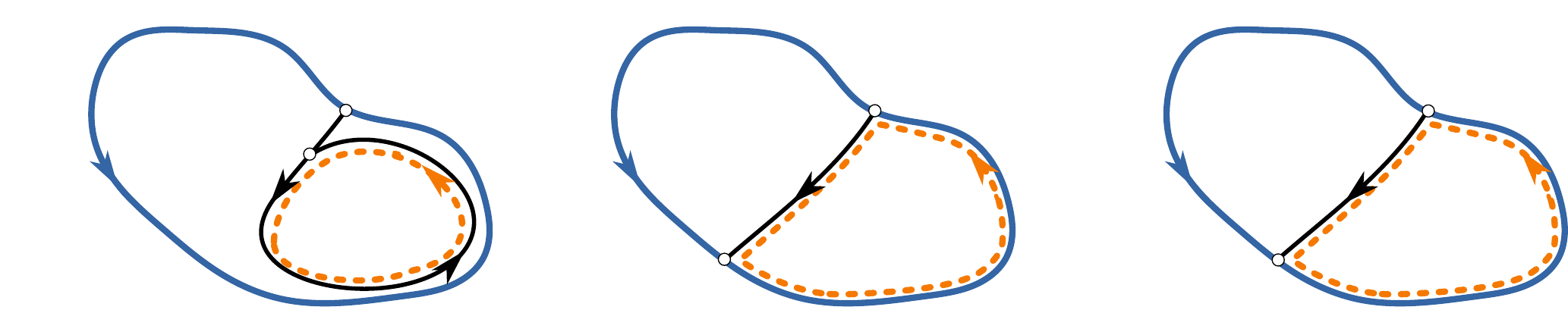}}%
    \put(0.3334524,0.13158001){\color[rgb]{0,0,0}\makebox(0,0)[lb]{\smash{\g{(b)}}}}%
    \put(0.6830396,0.13158001){\color[rgb]{0,0,0}\makebox(0,0)[lb]{\smash{\g{(c)}}}}%
    \put(0.53888638,0.06861527){\color[rgb]{0,0,0}\makebox(0,0)[lb]{\smash{$\times$}}}%
    \put(0.41996453,0.16975159){\color[rgb]{0.20392157,0.39607843,0.64313725}\makebox(0,0)[lb]{\smash{\s{$C$}}}}%
    \put(0.52769664,0.03671875){\color[rgb]{0.80784314,0.36078431,0}\makebox(0,0)[lb]{\smash{\s{$C'$}}}}%
    \put(0.55856467,0.14923442){\color[rgb]{0,0,0}\makebox(0,0)[lb]{\smash{\s{$v$}}}}%
    \put(0.43692269,0.02988702){\color[rgb]{0,0,0}\makebox(0,0)[lb]{\smash{\s{$w$}}}}%
    \put(0.78037028,0.12517174){\color[rgb]{0,0,0}\makebox(0,0)[lb]{\smash{$\times$}}}%
    \put(0.77224086,0.16975159){\color[rgb]{0.20392157,0.39607843,0.64313725}\makebox(0,0)[lb]{\smash{\s{$C$}}}}%
    \put(0.87997297,0.03671875){\color[rgb]{0.80784314,0.36078431,0}\makebox(0,0)[lb]{\smash{\s{$C'$}}}}%
    \put(0.91156199,0.14908432){\color[rgb]{0,0,0}\makebox(0,0)[lb]{\smash{\s{$v$}}}}%
    \put(0.78894772,0.02994414){\color[rgb]{0,0,0}\makebox(0,0)[lb]{\smash{\s{$w$}}}}%
    \put(0,0.13158001){\color[rgb]{0,0,0}\makebox(0,0)[lb]{\smash{\g{(a)}}}}%
    \put(0.08651213,0.16975159){\color[rgb]{0.20392157,0.39607843,0.64313725}\makebox(0,0)[lb]{\smash{\s{$C$}}}}%
    \put(0.20500077,0.04747528){\color[rgb]{0.80784314,0.36078431,0}\makebox(0,0)[lb]{\smash{\s{$C'$}}}}%
    \put(0.22123025,0.14944432){\color[rgb]{0,0,0}\makebox(0,0)[lb]{\smash{\s{$v$}}}}%
    \put(0.1656479,0.11669336){\color[rgb]{0,0,0}\makebox(0,0)[lb]{\smash{\s{$w$}}}}%
  \end{picture}%
\endgroup%

\vfigend{fig:icoc}{Three cases in proof of Lemma \ref{lem:cycleFace}.  Origin is represented by $\times$. (a) $C'$ stays inside $C$ throughout (b) Origin inside $C'$. (b) Origin outside $C'$.}{Proof contractible cycle is a face}

\begin{proof}
Arbitrarily declare one face to contain the origin so that ``inside'' and
``outside'' are well-defined for any contractible directed cycle $C$.
Let $O_C$ and $I_C$ be the number of faces outside and inside $C$,
and consider a directed contractible cycle $C$ that maximizes $|O_C - I_C|$.

Assume that $C$ is not a face and therefore there exists some vertex $v\in C$
with an incident edge $e$ that is not on $C$ and belongs to the inside of $C$.
Follow a directed path $P$ starting with $e$ while staying inside $C$ until we
reach a vertex $w$ where either (a) $P$ returns to a vertex on $P$ for the first time,
or (b) $P$ reaches another vertex on $C$ for the first time.
Because $G$ is a $2,2$-regular digraph with a finite number of vertices,
one of these two cases must eventually occur.
(Note: if $v$ is the head endpoint of $e$, $P$ follows the edges in reverse direction).

If $w\in P$, then the path from $w$ to $w$ along $P$ has formed
another directed cycle $C'$ that stays inside $C$ throughout and
therefore is also contractible (see Figure~\ref{fig:icoc}(a)).

If $w\in C$,then we can find a directed cycle $C'$ by taking $P$ from
$v$ to $w$ and then $C$ from $w$ to $v$.

Consider the case where $O_C \ge I_C$ (the other case is symmetric).
Examine $|O_{C'} - I_{C'}|$.
If the origin is inside $C'$ (see Figure~\ref{fig:icoc}(b)), then $I_{C'} < I_C$ and $O_{C'} > O_C$,
therefore
$|O_{C'} - I_{C'}| \geq O_{C'} - I_{C'} > O_C-I_C = |O_C - I_C|$,
a contradiction.
If the origin is outside $C'$ (see Figure~\ref{fig:icoc}(c)) then the outside of $C'$ is strictly inside
the inside of $C$, so $O_{C'}<I_C$ and $I_{C'}>O_C$, which implies
$|O_{C'}-I_{C'}|
=|I_{C'}-O_{C'}|
\geq I_{C'}-O_{C'} > O_C-I_C=|O_C-I_C|$, a contradiction to the choice of $C$.

We therefore conclude that no vertex on $C$ has an incident edge strictly inside $C$
and therefore the inside of $C$ is a face as required.

We can easily extend this proof to the case where $C$ is a contractible directed circuit by noting that any transverse intersection in $C$ can be replaced by an osculating intersection thus dividing $C$ into two or more smaller circuits.  We therefore assume that any time $C$ visits a vertex $v$ twice, it does not cross itself transversely.  We can very closely approximate a circuit $C$ that is free of transverse crossings by duplicating any repeated vertex $v$ and moving the two copies slightly apart.  The result is a simple directed cycle.
\qed\end{proof}

\section*{Appendix C: Proof of Lemma~\ref{lem:connectingPath}}

Before we can prove Lemma~\ref{lem:connectingPath}, we need a few definitions and observations.
We will call a simple directed circuit $C$
a {\em leftmost circuit} if all edges of $C$ are left outgoing at their
respective tails.  Observe that any leftmost circuit $C$ in a $2$-$2$-regular digraph is actually a cycle,
because no vertex can have two left outgoing edges incident to it.

Consider two simple cycles $P,Q$.  We already defined the algebraic
crossing number $\hat{i}(P,Q)$ if $P$ and $Q$ meet in a finite number
of points.  However, we need a similar concept even if $P$ and $Q$
share one or more edges.  To define this, replace $P$ and $Q$ by cycles $P'$ and $Q'$
that are arbitrarily close to $P$ and $Q$ in such a way that $P'$ and $Q'$
intersect in a finite number of points, and define
$\hat{i}(P,Q):=\hat{i}(P',Q')$.  It is not hard to verify that this number
is independent of the choice of $P'$ and $Q'$ since we use the algebraic
crossing number (rather than the geometric one).

Now we can make the following observation of leftmost cycles:

\begin{lemma}
Presume (C1-C3) hold.  Let $C$ be a leftmost cycle.  Then $C$ does not bound a face, and
$\hat{i}(C,P)\neq 0$ for {\em all} osculating circuits $P\in \Part$.
\label{lem:leftmost1}
\end{lemma}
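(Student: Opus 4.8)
The plan is as follows. The first assertion is immediate: $C$ is a directed cycle, hence a directed circuit, so by (C3) it is non-contractible; were it the boundary walk of a face, that face -- an open disk, since the embedding is cellular -- would exhibit $C$ as contractible, a contradiction; and faces that are too short are in any case ruled out by (C2). So the work is all in the claim $\hat i(C,P)\neq 0$.

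First I would reduce to a single convenient choice of $P$. By Lemma~\ref{lem:pHomClass} the elements of $\Part$ all lie in one homotopy class, hence share a homology class on the torus; since $\hat i(\cdot,\cdot)$ is an algebraic intersection number it is determined by homology, so $\hat i(C,P_i)$ has the same value (up to sign) for every $i$. Thus it suffices to prove $\hat i(C,P)\neq 0$ for one osculating circuit $P$ sharing an edge with $C$; such a $P$ exists because $\Part$ partitions $E(G)$ and $C$ is nonempty. (When $|\Part|=1$ this $P$ is the unique, non-simple osculating circuit, which I would treat separately, using the decomposition of $P$ into a simple non-contractible cycle and a remainder as in the proof of Lemma~\ref{lem:partcircuit}. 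From now on assume $P$ is a simple directed cycle.)

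Next I would pin down how $C$ meets $P$. Since $C$ is leftmost it leaves every vertex along the left outgoing edge, whereas an osculating circuit that enters a vertex along its left incoming edge must leave along the left outgoing edge; hence at any vertex where $C$ and $P$ share no edge they cannot cross transversely, so every contribution to $\hat i(C,P)$ comes from a maximal common directed subpath of $C$ and $P$ -- a \emph{shared segment}. Using again that $C$'s outgoing edge is always the left outgoing edge, $C$ can begin a shared segment only at a vertex where $P$ uses its left incoming and left outgoing edges (entering that vertex along the right incoming edge, which $P$ does not use there), and can end a shared segment only at a vertex where $P$ uses its right incoming and right outgoing edges. Moreover $P$ has at least one vertex of each type: if $P$ used left incoming/outgoing edges at every vertex, then the face immediately to its left would be bounded by $P$ (the left ``corners'' of consecutive edges stitch into a single face), contradicting (C3) as in the first paragraph, and symmetrically on the right; in particular $C\neq P$.

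The last step bounds the signed crossing count. I would perturb $C$ to a nearby curve $C'$ that runs parallel to whichever osculating circuit $C$ is momentarily following. A local computation in the disk around each endpoint of a shared segment -- using only the clockwise rotation order (left out, right out, right in, left in) and the fact that $C$ always turns onto its left outgoing edge -- should show that $C'$ crosses $P$ exactly once per shared segment (at the segment's start if $C'$ runs on the left of $P$ there, at its end otherwise) and, crucially, that all such crossings carry the same sign. Then $|\hat i(C,P)|$ is exactly the number of shared segments, which is at least one by the choice of $P$, so $\hat i(C,P)\neq 0$; with the reduction above this yields $\hat i(C,P_i)\neq 0$ for all $i$. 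The step I expect to be hardest is precisely this last one -- establishing that the rigid ``always turn left'' property makes every crossing of $C'$ with $P$ have the same handedness, so the contributions cannot cancel, and checking that this meshes with the bookkeeping of where $C$ enters and leaves $P$; the non-simple case $|\Part|=1$ is a further, smaller wrinkle, handled by routing the argument through the simple non-contractible subcycle of $P$.
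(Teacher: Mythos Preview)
Your outline is correct and would work, but the paper reaches the same conclusion by a shorter and more uniform route.  The key simplification is this: since $C$ is leftmost, it uses \emph{every} left outgoing edge at each of its vertices; hence an osculating circuit that is momentarily running along $C$ can leave $C$ only via a right outgoing edge, i.e.\ only to the right.  Likewise it can \emph{join} $C$ only from the left (the only incoming edge at $v\in C$ that leads onto $C$'s left-outgoing edge under the osculating rule is the left incoming edge).  So every crossing of the chosen osculating circuit with $C$ is left-to-right, full stop --- no perturbation bookkeeping at segment endpoints is required, and the ``same sign'' conclusion you flag as the hard step falls out in one line.

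Two further economies in the paper's argument: first, it picks the osculating circuit not by choosing one that shares an edge with $C$, but by taking an edge $e_\ell$ on the left side of $C$ that is \emph{not} in $C$ (such an edge exists because $C$ does not bound a face) and letting $P_\ell$ be the osculating circuit through $e_\ell$.  This simultaneously exhibits a left-to-right crossing and rules out $C=P_\ell$, so your separate argument that $C\neq P$ becomes unnecessary.  Second, the argument nowhere uses simplicity of $P_\ell$, so your case split on $|\Part|=1$ is not needed; the paper's proof handles simple and non-simple osculating circuits uniformly.  Your reduction via Lemma~\ref{lem:pHomClass} to a single $P$ is the same as the paper's closing step.
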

\begin{proof}
$C$ cannot possibly bound a face, because it is a directed circuit
(hence non-contractible by (C3)) while  facial circuits are contractible
by (C2).  Therefore, there must be some edge $e_\ell$ on the left side of $C$ that shares a vertex $v$ with $C$, but does not  belong to $C$.
Since $C$ uses only left outgoing edges, $e_\ell$ is necessarily a
left incoming edge.  Let $P_\ell$ be the osculating circuit containing
$e_\ell$.  This circuit continues from $e_\ell$ along $C$ for possibly
some time, but eventually must leave $C$ again to return back to $e_\ell$.
$P_\ell$ can leave $C$ only at a right outgoing edge, because the left
outgoing edges always belong to $C$.   Therefore in this stretch $P_\ell$
has crossed $C$ going from left to right.

Note that $P_\ell$ can never cross $C$ from right to left, because
all left outgoing edges incident to a vertex in $C$ belong to $C$.  Therefore
$\hat{i}(P_\ell,C)\neq 0$.

Following from Lemma~\ref{lem:homClass},  $P_\ell$ and $C$ do not belong to the same
homotopy class.  From Lemma \ref{lem:pHomClass}, we know that all $P\in\Part$ belong to the same homotopy class so we can conclude $\hat{i}(P,C)\neq 0$ for all $P\in\Part$.
\qed\end{proof}

Now we can prove our main theorem:

\noindent{\bf Lemma~\ref{lem:connectingPath}.}
{\it
Presume (C1-C3) hold.  Let $P$ be a simple osculating cycle.  Then there exists a directed walk $W$
in $G$ that starts at a vertex $v\in P$ with a left outgoing edge
$e_1\not\in P$, ends at a vertex $w\in P$ with a right incoming edge
$e_k\not\in P$, and has no transverse intersection or shared edges with $P$.
}
\begin{proof}
Since $P$ is a directed cycle, it is non-contractible  by (C3) and does not bound a
face by (C2).  Therefore, there must be some edge on the left side of $P$ that shares a vertex $v$ with $P$, but does not  belong
to $P$.  If, say, the left incoming edge of $v$ is not in $P$,
then $P$ uses the right incoming and right outgoing edge at $v$ (because
$P$ follows rotationally consecutive edges), and so the left outgoing edge of $v$ is also not in $P$.
Therefore edge $e_1$ exists.

Now create a leftmost walk $W_\ell$ starting with $e_1$, then taking the left
outgoing edge at its head, and continue taking left outgoing edges until
one of the following happens:
\begin{enumerate}
\item We reach a vertex $w\in P$ via a right incoming edge.  In this case
we have found the required path $W$ and stop.
Since we stop as soon as the head vertex $w$ of a right incoming edge $e_k$ of $W_\ell$ is also a vertex of $P$, we can conclude that, before arriving at $w$, if we visited any other vertices in $P$, then these vertices were
reached from a left incoming edge, $e_i, 1\le i < k$.  Since $W_\ell$ always follows the left outgoing edge and $P$ always follows rotationally consecutive edges, we can conclude that
the left incoming edge $e_i$ and left outgoing edge $e_{i+1}$ were not edges in $P$ and the meeting of $W_\ell$ and $P$ was an osculating intersection rather than a transverse intersection.

\item $W_\ell$ repeats a vertex, and the resulting directed cycle $C$ does
not contain vertices of $P$.  Then $\hat{i}(C,P)=0$, which contradicts
Lemma~\ref{lem:leftmost1}.  This case cannot happen.

\item $W_\ell$ repeats a vertex, and the resulting directed cycle $C$
contains vertices of $P$.  Since we did not stop with Case 1, all
such vertices of $C$ reach $P$ from the left.  Following the same argument used
in Case 1, all meetings of $C$ with $P$ are osculating resulting in $\hat{i}(C,P)=0$.  This again contradicts
Lemma~\ref{lem:leftmost1} so this case cannot happen.
\end{enumerate}
The graph is finite and neither Case 2 nor 3 can happen, therefore
Case 1 eventually must happen and we have found $W$.
\qed\end{proof}

\section*{Appendix D: An example of the meridian and longitude}

\vfigbegin
    \def\svgwidth{\textwidth}
\begingroup%
  \makeatletter%
  \providecommand\color[2][]{%
    \errmessage{(Inkscape) Color is used for the text in Inkscape, but the package 'color.sty' is not loaded}%
    \renewcommand\color[2][]{}%
  }%
  \providecommand\transparent[1]{%
    \errmessage{(Inkscape) Transparency is used (non-zero) for the text in Inkscape, but the package 'transparent.sty' is not loaded}%
    \renewcommand\transparent[1]{}%
  }%
  \providecommand\rotatebox[2]{#2}%
  \ifx\svgwidth\undefined%
    \setlength{\unitlength}{982.60968018bp}%
    \ifx\svgscale\undefined%
      \relax%
    \else%
      \setlength{\unitlength}{\unitlength * \real{\svgscale}}%
    \fi%
  \else%
    \setlength{\unitlength}{\svgwidth}%
  \fi%
  \global\let\svgwidth\undefined%
  \global\let\svgscale\undefined%
  \makeatother%
  \begin{picture}(1,0.66763452)%
    \put(0,0){\includegraphics[width=\unitlength,page=1]{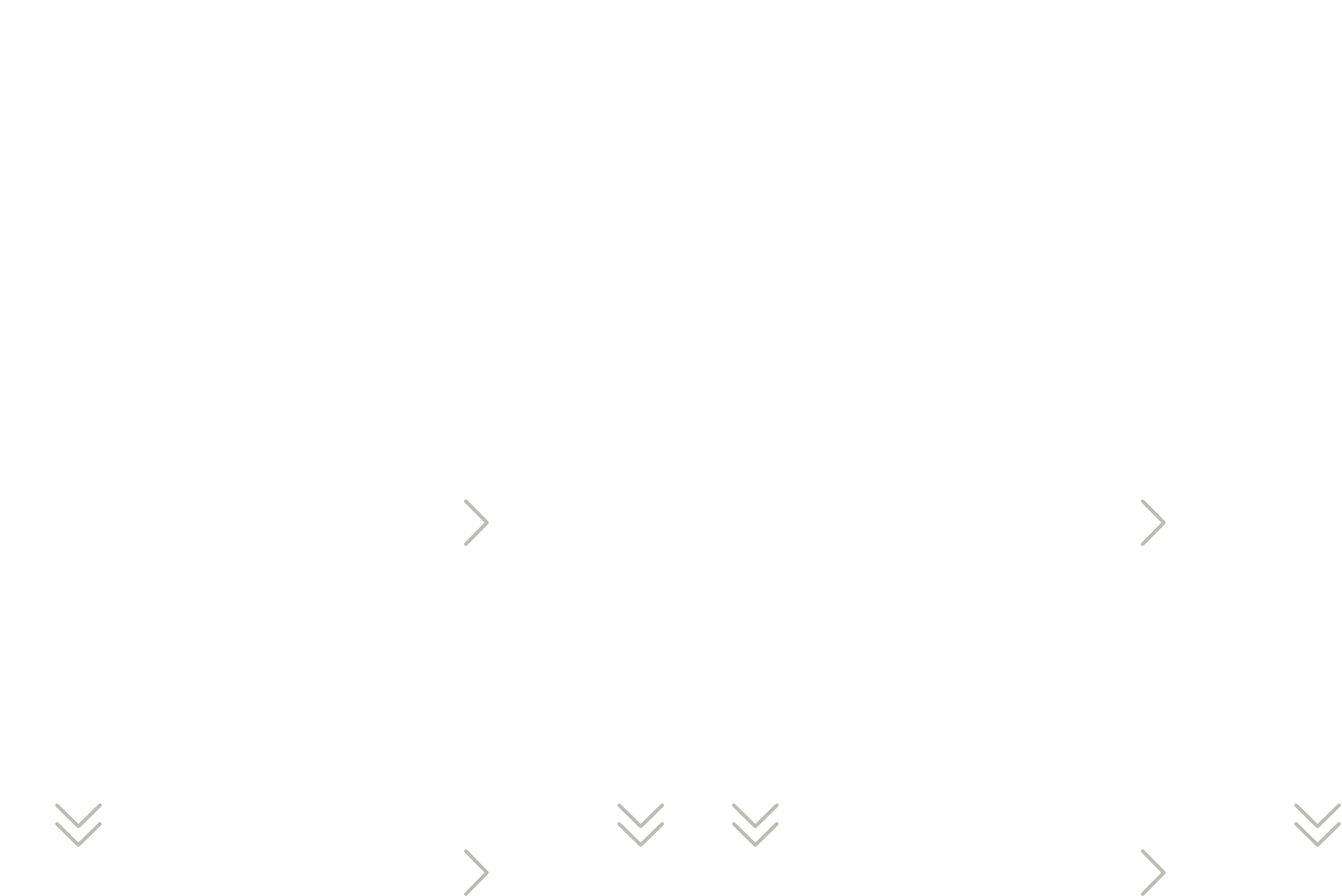}}%
    \put(0.01389396,0.47516779){\color[rgb]{0.05490196,0.1372549,0.18039216}\makebox(0,0)[b]{\smash{\g{(a)}}}}%
    \put(0.52681385,0.47516779){\color[rgb]{0.05490196,0.1372549,0.18039216}\makebox(0,0)[b]{\smash{\g{(b)}}}}%
    \put(0.01389396,0.14612557){\color[rgb]{0.05490196,0.1372549,0.18039216}\makebox(0,0)[b]{\smash{\g{(c)}}}}%
    \put(0.52681385,0.14612557){\color[rgb]{0.05490196,0.1372549,0.18039216}\makebox(0,0)[b]{\smash{\g{(d)}}}}%
    \put(0,0){\includegraphics[width=\unitlength,page=2]{find_l_m_2b.pdf}}%
    \put(0.07985699,0.51417764){\color[rgb]{0.05490196,0.1372549,0.18039216}\makebox(0,0)[lb]{\smash{\s{0}}}}%
    \put(0.14837167,0.51417764){\color[rgb]{0.05490196,0.1372549,0.18039216}\makebox(0,0)[lb]{\smash{\s{1}}}}%
    \put(0.2058091,0.51417764){\color[rgb]{0.05490196,0.1372549,0.18039216}\makebox(0,0)[lb]{\smash{\s{1}}}}%
    \put(0.26105018,0.51417764){\color[rgb]{0.05490196,0.1372549,0.18039216}\makebox(0,0)[lb]{\smash{\s{0}}}}%
    \put(0.3182165,0.51417764){\color[rgb]{0.05490196,0.1372549,0.18039216}\makebox(0,0)[lb]{\smash{\s{0}}}}%
    \put(0.58547509,0.51417764){\color[rgb]{0.05490196,0.1372549,0.18039216}\makebox(0,0)[lb]{\smash{\s{0}}}}%
    \put(0.6455266,0.46947057){\color[rgb]{0.05490196,0.1372549,0.18039216}\makebox(0,0)[lb]{\smash{\s{0}}}}%
    \put(0.64584465,0.42556625){\color[rgb]{0.05490196,0.1372549,0.18039216}\makebox(0,0)[lb]{\smash{\s{1}}}}%
    \put(0.64584465,0.39176036){\color[rgb]{0.05490196,0.1372549,0.18039216}\makebox(0,0)[lb]{\smash{\s{1}}}}%
    \put(0.66128608,0.37920583){\color[rgb]{0.05490196,0.1372549,0.18039216}\makebox(0,0)[lb]{\smash{\s{1}}}}%
    \put(0.70438728,0.37920583){\color[rgb]{0.05490196,0.1372549,0.18039216}\makebox(0,0)[lb]{\smash{\s{1}}}}%
    \put(0.73143299,0.37924161){\color[rgb]{0.05490196,0.1372549,0.18039216}\makebox(0,0)[lb]{\smash{\s{0}}}}%
    \put(0.76727405,0.33180386){\color[rgb]{0.05490196,0.1372549,0.18039216}\makebox(0,0)[lb]{\smash{\s{0}}}}%
    \put(0.79475562,0.58837856){\color[rgb]{0.05490196,0.1372549,0.18039216}\makebox(0,0)[lb]{\smash{\s{1}}}}%
    \put(0.90851907,0.51663084){\color[rgb]{0.05490196,0.1372549,0.18039216}\makebox(0,0)[lb]{\smash{\s{1}}}}%
    \put(0,0){\includegraphics[width=\unitlength,page=3]{find_l_m_2b.pdf}}%
  \end{picture}%
\endgroup%

\vfigend{fig:find_l_m_2}{Find rectangular representation. (a) Shortest path $L$. (b) Alternate $L$. (c) Draw graph with polygonal schema defined in \emph{(a)}. (d)Draw graph with polygonal schema defined in \emph{(b)}.}{Find rectangular representation}

Figure~\ref{fig:find_l_m_2} illustrates how the graph $G$ (of Figure~\ref{fig:find_l_m}(b))
is converted into a rectangular representation.  We consider two different
choices for $L$.  Note that while both of them are perfectly fine longitudes,
the left one is more visually appealing, principally because $L$ and $G$ have significantly fewer crossings.  To minimize the number of times $L$ crosses $G$, assign a weight to the edges of $\mathcal{L}(G)$ according to the number of endpoints the edge has that are also a vertex or dummy vertex of an edge in $G$.

\end{appendix}

\end{document}